\theoremstyle{plain}
\newtheorem{theorem}{Theorem}
\newtheorem{proposition}{Proposition}
\newtheorem{lemma}{Lemma}
\theoremstyle{definition}
\begin{document}

\title{\vspace{-0.6cm}\textbf{Identifiable factor analysis for mixed continuous and binary variables based on the Gaussian-Grassmann distribution}}

\author{Takashi Arai\thanks{\texttt{takashi-arai@sci.kj.yamagata-u.ac.jp}} \vspace{0.2cm} \\ 
Faculty of Science, Yamagata University, Yamagata 990-8560, Japan}

\date{\vspace{-0.5cm}}

\maketitle

\abstract{
    We develop a factor analysis for mixed continuous and binary observed variables.
    To this end, we utilized a recently developed multivariate probability distribution for mixed-type random variables, the Gaussian-Grassmann distribution.
    In the proposed factor analysis, marginalization over latent variables can be performed analytically, yielding an analytical expression for the distribution of the observed variables. 
    This analytical tractability allows model parameters to be estimated using standard gradient-based optimization techniques.
    We also address improper solutions associated with maximum likelihood factor analysis.
    We propose a prescription to avoid improper solutions by imposing a constraint that row vectors of the factor loading matrix have the same norm for all features.
    Then, we prove that the proposed factor analysis is identifiable under the norm constraint.
    We demonstrate the validity of this norm constraint prescription and numerically verified the model's identifiability using both real and synthetic datasets.
    We also compare the proposed model with quantification method and found that the proposed model achieves better reproducibility of correlations than the quantification method.
}


\section{Introduction}
Factor analysis is a well-established method for elucidating the underlying latent structure of observed variables.
The application areas range from quantitative psychology and marketing to product management, where the subject of interest, e.g., depressive tendencies or personality traits, cannot be measured directly.
In such situations, latent variable models provide a principled framework for discovering the underlying structure of the data.
However, conventional factor analysis is designed primarily for continuous observed variables.
In today's data-rich era, categorical data, including nominal and ordinal variables, are also abundant, and conventional factor analysis cannot successfully handle such cases.
Consequently, there is a growing need to develop latent variable models that can handle mixed continuous and binary variables.

The easiest way to handle categorical variables is the method of quantification.
In quantification, categorical variables are transformed into dummy variables, which are then treated as continuous variables.
While the conditions under which such treatment is justified have been studied~\cite{Rhemtulla2012, Verhulst2021}, the quantification method is sometimes criticized as being unreflective.
Another straightforward approach is to extend the Gaussian conditional distribution given latent variables to an exponential family distribution, i.e., binary Factor Analysis~\cite{Tipping1998} or exponential family principal component analysis (PCA)~\cite{Collins2001, Mohamed2008}.
In the fields of educational measurement and psychometrics, the same model is referred to as (multidimensional) Item Response Theory (IRT)~\cite{Lord1980, Reckase2009, Bock1981}.
The approach of exponential family PCA preserves the discreteness of the categorical data, but on the other hand, it has the drawback that the distribution for observed variables cannot be expressed analytically.
This drawback complicates the numerical calculations required to estimate model parameters, especially when the number of latent dimensions is large.
This drawback makes it necessary to rely on computationally demanding and time-consuming methods for parameter estimation such as Markov chain Monte Carlo (MCMC) simulation~\cite{Patz1999, Mohamed2008}.

The purpose of this paper is twofold.
One is to extend factor analysis to include binary variables.
For this purpose, we utilize the recently developed multivariate distribution for mixed continuous and binary random variables, the Gaussian-Grassmann distribution~\cite{Arai2022}.
The Gaussian-Grassmann distribution is a natural extension of the Grassmann distribution~\cite{Arai2021}, and corresponds to a location model~\cite{Olkin1961} with parametrized mixing probabilities and the associated parametrized normal distributions.
In the Grassmann distribution and the Gaussian-Grassmann distribution, the joint, marginal, and conditional distributions are given analytically.
This analytical tractability is inherited in the proposed factor analysis, and provides an advantage in estimating model parameters.
The second purpose is to address the issues of improper solutions and non-identifiability of the factor analysis.
Improper solutions, also known as Heywood cases, often arise in maximum likelihood estimation of the factor analysis.
To be more precise, improper solutions arise when some elements of the estimated unique variances take values close to zero, resulting in the factor scores taking extreme values.
The issue of improper solutions complicates the interpretation of the latent space as well as the non-identifiability of model parameters.
We find that similar improper solutions also appear in the proposed factor analysis.
Therefore, we propose a prescription, a norm constraint prescription, for avoiding improper solutions of model parameters associated with maximum likelihood estimation of both the conventional and proposed factor analysis. 
Then, we prove that the proposed factor analysis is identifiable under the norm constraint.

This paper is organized as follows.
In Sec.~\ref{sec:fa}, we provide summary of the proposed factor analysis for mixed-type data.
The norm constraint prescription and the definition of the contribution ratio of the latent space are described in detail.
In Sec.~\ref{sec:numerical_validation}, the validity of the proposed factor analysis is demonstrated by analyzing real and synthetic datasets. 
A biplot visualization and its interpretation are given.
The sampling distribution of maximum likelihood estimates is also investigated.
Section~\ref{sec:conclusion} is devoted to conclusions.
In the appendix, a proof of the identifiability of the proposed factor analysis is given.

\section{Factor analysis for mixed-type variables} \label{sec:fa}
In this section, we summarize the proposed factor analysis for mixed continuous and binary variables.
Of course, our model reduces to the usual factor analysis when binary variables are absent.

\subsection{Summary of the results}
We denote continuous and binary observed variables by $\mathbf{x}$ and $\mathbf{y}$, respectively.
We introduce continuous latent variables denoted by $\mathbf{z}$.
Each variable is a column vector and its dimensions are $p_x$, $q$, and $p_z$, respectively.
To express the proposed factor analysis, we define the notation of index labels.
We denote the set of whole indices of binary variables as $R \equiv \{1,2,\dots, q\}$.
The set of whole indices for binary variables is divided into two parts with subscripts of $1$ or $0$, $R = (R_1, R_0)$, for variables that take a value of $1$ or $0$, respectively.
Then, we write a subvector of the binary variables taking a value of $1$ and $0$ as $\mathbf{y}_{R_1}=\bm{1}$ and $\mathbf{y}_{R_0}=\bm{0}$, respectively.
We define a $q$-dimensional bit vector $\bm{1}_{R_1}$ whose elements take a value of $0$ or $1$,
\begin{align}
    [\bm{1}_{R_1}]_s \equiv
    \begin{cases} 1, \hspace{0.3cm} \text{if} \; \; s \in R_1 \\ 
        0, \hspace{0.3cm} \text{if} \; \; s \in R_0
    \end{cases}, \; \; (s=1, 2, \dots, q),
\end{align}

The proposed factor analysis is parameterized by $(\bm{\mu}_x, \Psi, W)$ for continuous variables and $(\mathbf{b}, G)$ for binary variables.
The $p_x$-dimensional column vector $\bm{\mu}_x$ parameterizes the mean of observed continuous variables, and the $p_x \times p_x$ diagonal matrix $\Psi$ is a covariance matrix of observational noise.
By convention, we sometimes refer to $\Psi$ as the unique variance.
The $p_x \times p_z$ matrix $W$ is a factor loading matrix for continuous variables~\cite{Murphy2012}.
The $q$-dimensional column vector $\mathbf{b}$ represents a bias term for binary variables, and the $q \times p_z$ matrix $G$,
\begin{align}
    G \equiv \begin{bmatrix} \mathbf{g}_1, &  \mathbf{g}_2, & \dots, & \mathbf{g}_q \end{bmatrix}^T,
\end{align}
is a factor loading matrix for binary variables, and $\mathbf{g}_j$ is a $p_z$-dimensional column vector.

Then, we express the conditional distribution of the observed variables given the latent variable as the product of conditionally independent normal and Bernoulli distributions $\mathrm{Ber}(\cdot)$, the latter being parametrized via a logit link function, as follows:
\begin{align}
    p(\mathbf{x},\mathbf{y}|\mathbf{z})
    =&
    \mathcal{N}\bigl(\mathbf{x} \mid \bm{\mu}_x + W (\mathbf{z} - \bm{\mu}_z), \Psi\bigr)
    \prod_{j=1}^q \mathrm{Ber}\bigl(y_j \mid \mathrm{sigm}(b_{j} + \mathbf{g}_j^T (\mathbf{z}-\bm{\mu}_z)) \bigr), \notag \\
    \equiv &
    \frac{1}{(2\pi)^{p_x/2}\det\Psi^{1/2}} \exp\biggl\{-\frac{1}{2}(\mathbf{x}-\bm{\mu}_x-W (\mathbf{z} - \bm{\mu}_z))^T \Psi^{-1}(\mathbf{x}-\bm{\mu}_x-W (\mathbf{z} - \bm{\mu}_z)) \biggr\} \notag \\
    & \frac{ \exp \Bigl\{\mathbf{y}^T ( \mathbf{b} + G (\mathbf{z}-\bm{\mu}_z) ) \Bigr\}}{\prod_{j=1}^q \Bigl[ 1 + \exp \bigl\{ b_j + \mathbf{g}_j^T (\mathbf{z} -\bm{\mu}_z) \bigr\} \Bigr]},
    \label{eq:fa_conditional}
\end{align}
where $\mathrm{sigm}(\cdot)$ refers to a sigmoid function.

The distributional assumption for latent variables $p(\mathbf{z})$ is given by a mixture of Gaussian distributions with a common covariance matrix:
\begin{align}
    p(\mathbf{z}) = & 
    \sum_{R_1 \subseteq R} \pi_{R_1}(\Sigma_z) \, \mathcal{N}(\mathbf{z} \mid \bm{\mu}_z + \Sigma_z G^T \bm{1}_{R_1}, \Sigma_z), \label{eq:fa_prior} \\
    \pi_{R_1}(\Sigma_z) \equiv & 
    \frac{ \exp \Bigl( \bm{1}_{R_{1}}^T \mathbf{b} + \frac{1}{2} \bm{1}_{R_{1}}^T G \Sigma_z G^T \bm{1}_{R_{1}} \Bigr) }{\sum_{R_{1}' \subseteq R} \exp \Bigl( \bm{1}_{R_{1}'}^T \mathbf{b} + \frac{1}{2} \bm{1}_{R_{1}'}^T G \Sigma_z G^T \bm{1}_{R_{1}'} \Bigr)},
\end{align}
where the summation $\sum_{R_1 \subseteq R}$ runs over all possible index sets for binary variables.
In this paper, the distribution of the latent variables $p(\mathbf{z})$ is referred to as the prior distribution, according to convention.
It should be noted that, although it is called a prior distribution, it is part of the model, and its parameters are estimated from the observed data.
Then, the observed distribution is induced as a continuous mixture of the conditional distribution:
\begin{align}
    p(\mathbf{x}, \mathbf{y})
    =&
    \pi_{R_1}(\Sigma_z) \, \mathcal{N}(\mathbf{x} \mid \bm{\mu}_x + W \Sigma_z G^T \mathbf{y}, \Sigma_x), \label{eq:fa_induced} \\
    \Sigma_x =& \Psi + W \Sigma_z W^T. \label{eq:fa_continuous_covariance_matrix}
\end{align}
This is a location model in which the class probabilities $\pi_{R_1}(\Sigma_z)$ are expressed as an Ising model, the location parameter are shifted by dummy variables, and covariance parameters are shared across all composite categories.
When observed variables consist exclusively of binary variables, the observed distribution is exactly the same form as the Ising model~\cite{Ising1925}, where $\mathbf{b}$ is a bias term and $\frac{1}{2} G \Sigma_z G^T$ is a weight term of the Ising model.

A posterior distribution for $\mathbf{z}$ is simply given by a normal distribution:
\begin{align}
    p(\mathbf{z}|\mathbf{x}, \mathbf{y}) =& \mathcal{N}(\mathbf{z} \mid \mathbf{m}, \Sigma_{z|x}) , \label{eq:fa_posterior} \\
    \mathbf{m} =& \bm{\mu}_z + \Sigma_{z|x} \bigl\{ W^T \Psi^{-1}(\mathbf{x}-\bm{\mu}_x) + G^T \mathbf{y}\bigr\} , \label{eq:factor_score} \\
    \Sigma_{z|x} =& \bigl(\Sigma_z^{-1} + W^T \Psi^{-1} W \bigr)^{-1}. \label{eq:factor_score_variance}
\end{align}
We call $\mathbf{m}$ in the above expression a factor score~\cite{Murphy2012}.
We can give a natural interpretation between prior and posterior distributions for the latent variables $\mathbf{z}$.
When there is no information on the binary variables $\mathbf{y}$, we represent the uncertainty of the latent variable, the prior distribution for $\mathbf{z}$, by arranging $2^q$ normal distributions with equal covariance.
By observing the binary variables, the posterior distribution reduces to single normal distribution out of a mixture of $2^q$ normal distributions.
We give the expression for the joint distribution for future reference:
\begin{align}
    p(\mathbf{x}, \mathbf{z}, \mathbf{y})
    =& 
    \pi_{R_1}(\Sigma_z) \, \mathcal{N}(\mathbf{x} \mid \bm{\mu}_x + W(\mathbf{z}-\bm{\mu}_z), \Psi)
    \, \mathcal{N}(\mathbf{z} \mid \bm{\mu}_z + \Sigma_z G^T \mathbf{y}, \Sigma_{z}). \label{eq:fa_joint}
\end{align}
More detailed expressions for the proposed factor analysis with missing values are given in Appendix~\ref{sec:appendix_fa}.

The parameter $\Sigma_z$ can be renormalized to the redefinition of the parameters $G'=G \Sigma_z^{1/2}$ and $W'=W \Sigma_z^{1/2}$, and the scale transformation of the latent variable $\mathbf{z}'=\Sigma_z^{-1/2} \mathbf{z}$ and $\bm{\mu}_z' = \Sigma_z^{-1/2} \bm{\mu}_z $.
By contrast, the parameter $\bm{\mu}_z$ is irrelevant to the representability of the model and only affects the interpretation of the latent variable, since the likelihood function does not depend on $\bm{\mu}_z$.
Therefore, in the remainder of this paper, we set $\Sigma_z = I$ without loss of generality, and also set
\begin{align}
	\bm{\mu}_z = - \frac{1}{2} \sum_{j=1}^{q} \mathbf{g}_j.
\end{align}

As an existing model of factor analysis for binary variables, there exists binary Factor Analysis~\cite{Tipping1998}.
In the fields of educational measurement and psychometrics, the same model is referred to as multidimensional IRT~\cite{Lord1980, Reckase2009, Bock1981}.
Binary Factor Analysis can be viewed as an example of a more general framework of exponential family PCA~\cite{Collins2001, Mohamed2008}, although mathematically it is more appropriate to call it factor analysis rather than principal component analysis.
Exponential family PCA uses an exponential family distribution for the conditional distribution of observed variables given the latent variable $\mathbf{z}$.
For example, in binary Factor Analysis, the conditional distribution for observed variables is given by a Bernoulli distribution with the logit link function, $p(\mathbf{y}|\mathbf{z},\theta)= \prod_{j=1}^q \mathrm{Ber}(y_j \mid \mathrm{sigm}( w_0 + \mathbf{w}_j^T \mathbf{z}))$.
This choice of the conditional distribution is the same as in our model.
By contrast, the prior distribution for the latent variable is given by a normal distribution with zero mean and unit covariance, $p(\mathbf{z})=\mathcal{N}(\mathbf{z} \mid \bm{0},I)$, unlike our model.
This introduction of the Gaussian prior distribution, however, has the drawback that the marginalization of the latent variable cannot be performed analytically, i.e., the induced distribution cannot be expressed analytically.
This drawback causes difficulty in estimating model parameters.
Hence, one has to resort to an approximation technique for parameter estimation such as variational expectation-maximization algorithm~\cite{Tipping1998}, which approximates the functional form of the posterior distribution $p(\mathbf{z}|\mathbf{y})$, or a computationally-demanding and time-consuming Monte Carlo simulation such as MCMC methods~\cite{Patz1999, Mohamed2008}.

The main difference between our model and previous studies lies in the introduction of a mixture of Gaussian distributions as a prior distribution $p(\mathbf{z})$ for the latent variables, while the proposed factor analysis itself is originally based on the Gaussian-Grassmann distribution.
In this mixture prior, the mixing weights are parametrized by the Ising model, and the means of the normal distributions are shifted by the parameter $G$.
The introduction of the mixture prior enables closed-form marginalization over the latent variables and allows the induced observed distribution to be expressed analytically.
Hence, our model has the advantage that model parameters can be estimated by maximum likelihood through standard gradient-based optimization techniques.
However, calculating the model's partition function requires summing over all possible states for binary variables, which increases exponentially with the number of binary variables.

\subsection{Improper solutions in factor analysis}
It is quite important to mention the instability of model parameters in maximum likelihood estimation for usual factor analysis.
In the usual factor analysis, the covariance matrix of the induced distribution is represented by a low-rank perturbation of a diagonal matrix, where the covariance matrix is given by the unique variance $\Psi$ plus the contribution from the latent space $W W^T$, as shown in Eq.~(\ref{eq:fa_continuous_covariance_matrix}).
However, maximum likelihood estimates are often numerically unstable, e.g., the estimated parameters may change drastically as the number of latent dimensions varies.
Specifically, some elements of the unique variance can take values as close to zero as possible.
Then, from the Eq.~(\ref{eq:factor_score}), the absolute value of some elements of the factor scores take quite large values, leading to numerical instability.
As a result, the interpretation of the factor scores becomes difficult.
In some research fields, such a problem has been recognized as improper solutions~\cite{Gerbing1985, Gerbing1987}, or Heywood cases, of maximum likelihood estimation in factor analysis.
Although the causes of such instability have been investigated and various prescriptions for avoiding the instability have been proposed~\cite{Cooperman2022}, they are somewhat technical and do not seem to offer a fundamental solution.
We believe this instability is the reason why factor analysis is not as widely used as probabilistic/nonprobabilistic PCA in practice~\cite{Price2006, Yamaguchi2008}, even though probabilistic PCA is a special case of factor analysis and makes the unnatural assumption that the variance of observational noise for all observed variables is the same, i.e., homoscedastic.
That is, the results of probabilistic PCA depend on a scale transformation of observed variables.
We believe that this dependence on the scale transformation is not a desirable property for a data analysis method, even though in practice each observed variable is often standardized to approximately meet the homoscedasticity assumption.

We therefore propose a way to avoid the instability in maximum likelihood factor analysis.
We consider that the cause of the improper solutions is the excessive degrees of freedom in the model parameters.
Improper solutions arise when a specific observed variable can be fully reconstructed by the latent variables, causing its unique variance to become zero.
Therefore, by imposing the constraint that the norm of the row vector of $\Psi^{-1/2} W$ is the same for all features, we prevent specific features from being fully reconstructed by the latent space.
To be more precise, we express the factor loading matrix $W$ using a dimensionless factor loading matrix $\tilde{W} \equiv \Psi^{-1/2} W$ as follows:
\begin{align}
    W = \Psi^{1/2} \tilde{W} = c \Psi^{1/2} \hat{W},
\end{align}
where $c$ is the norm of the row vector of $\tilde{W}$, and we have defined the normalized factor loading matrix as $\hat{W}$ whose row vectors are all normalized to one.
In this constraint, the covariance matrix $\Sigma_x$ is expressed as
\begin{align}
    \Sigma_x =& \Psi + W W^T
    = \Psi + (c\Psi)^{1/2} \hat{W} \hat{W}^T (c\Psi)^{1/2}, \notag \\
    =& (1 + c^2) \Psi \; \biggl(1-\frac{c^2}{1 + c^2} \biggr) + (1+c^2) \Psi^{1/2} \hat{W} \hat{W}^T \Psi^{1/2} \; \biggl( \frac{c^2}{1 + c^2} \biggr).
\end{align}
From the above equation, we see the relation $(1 + c^2) \Psi = \mathrm{diag}(\Sigma_x)$.
Noting that $\mathrm{diag}(WW^T) = c^2 \Psi$, the covariance matrix of the observed variables can also be expressed as follows:
\begin{align}
	\Sigma_x = & \mathrm{diag} \bigl(W W^T \bigr) \frac{1}{c^2} + W W^T.
\end{align}
The coefficient $c \ge 0$ controls the strength of the influence of the latent space.
The fraction $c^2 /(1+c^2)$ represents the proportion of the variance of the observed variables that can be explained by the latent variable.
This norm constraint on the factor loading matrix $W$ allows for a clear distinction in the role of the parameters: $W$ is used exclusively to reconstruct the covariance among observed variables, while the unique variance $\Psi$ is used exclusively to account for the variance of observational noise.

We found that the proposed factor analysis for binary variables also suffers from the instability of model parameters similar to that of the continuous variables.
In this case, some elements of $\mathbf{b}$ take a large negative value, i.e. the sign of some elements of $\mathbf{b}$ is negative and their absolute value become large, as well as some elements of the corresponding rows of $G$ take a large value.
Hence, as in the case of continuous variables, we impose the constraint on the binary factor loading matrix $G$ that each row vector has the same norm $c$:
\begin{align}
    G = & c \, \hat{G},
\end{align}
where we have defined the normalized factor loading matrix $\hat{G}$ whose row vectors are all normalized to one.

In the case of factor analysis for mixed continuous and binary variables, a similar norm constraint prescription is applied to avoid improper solutions.
We impose norm constraints on both the factor loading matrices for continuous and binary variables as
\begin{align}
    & \Psi^{-1/2} W = c \hat{W}, \hspace{1cm}
    G = c \, \hat{G},
\end{align}
where the norm $c$ takes a common value for continuous and binary variables.
The above equation is equivalent to imposing norm constraints on each row vector of the following combined factor loading matrix $M$,
\begin{align}
    M = c \hat{M} = c
    \begin{bmatrix}
        \hat{W} \\
        \hat{G}
    \end{bmatrix}, \hspace{0.5cm} 
    \hat{W} \equiv \begin{bmatrix} \hat{\mathbf{w}}_1^T; \; \hat{\mathbf{w}}_2^T; \; \dots; \; \hat{\mathbf{w}}_{p_x}^T \end{bmatrix}
    , \hspace{0.5cm} 
    \hat{G} \equiv \begin{bmatrix} \hat{\mathbf{g}}_1^T; \;  \hat{\mathbf{g}}_2^T; \; \dots; \; \hat{\mathbf{g}}_q^T \end{bmatrix}, \label{eq:factor_loading_vector}
\end{align}
where the semicolon denotes the vertical concatenation of the row vectors, $[\hat{\mathbf{w}}_1^T; \; \hat{\mathbf{w}}_2^T] \equiv [\hat{\mathbf{w}}_1, \; \hat{\mathbf{w}}_2]^T$.
Here, we shall call each row vector of normalized factor loading matrices, $\hat{\mathbf{w}}_j^T$ and $\hat{\mathbf{g}}_j^T$, a normalized factor loading vector.

\subsection{Identifiability}
When interpreting factor analysis models, it is crucial that the models are identifiable.
As in the case of the usual factor analysis, the factor loading matrices $W$ and $G$ have rotational and sign reversal symmetry on the latent space.
In fact, the likelihood function is invariant under the rotational transformation of the combined factor loading matrix $M' = M R$, where $R$ is a rotation matrix.
Although Ref.~\cite{Rubin1956} discusses the identifiability of factor analysis for continuous variables, it only mentions the conditions for the estimated parameters, and does not provide general conditions.
While there is a straightforward method of counting the degrees of freedom for model parameters~\cite{Murphy2012}, this represents a necessary condition, not a sufficient one, and thus may not work well.

Recently, it has been proven that continuous factor analysis with the norm constraint on the factor loading matrix is identifiable when an appropriate condition on the number of latent dimensions is imposed~\cite{Arai2025}.
Similarly, in factor analysis for mixed continuous and binary variables with the norm constraint, the following theorem demonstrates that the model is identifiable.
\begin{theorem} \label{theorem:1}
    Let the observed variables $\mathbf{x}$ and $\mathbf{y}$ be generated by the proposed factor analysis of Eq.~(\ref{eq:fa_induced}) with the following model parameters,
    \begin{align}
        \Sigma_x =& \Psi + W W, \\
        \bm{\mu} =& \bm{\mu}_x + W G^T \mathbf{y}, \\
        \Sigma_y =& 2 B + G G^T.
    \end{align}
    If the dimension of the latent space satisfies $q \le p_x + p_y$, the row norms of the dimensionless factor loading matrix $M \equiv [\Psi^{-1/2} W; G]$ are equal for all features, $\mathrm{diag}(M M^T) = c^2 I$, the symmetric matrix $M^T M$ is diagonalized with its nonzero and nondegenerate eigenvalues sorted in descending order $M^T M = \mathrm{diag}(\omega^2) = \Omega^2$, $\omega_i > \omega_j$, $(i < j)$, and the row sums of the matrix $M$ are nonnegative,
    \begin{align}
        \sum_{i=1}^{p_x + p_y} [M]_{ij} \ge 0, \hspace{0.5cm} j \in \{1, 2, \dots, q \}, \label{eq:G_sign}
    \end{align}
    then the model parameters are identifiable.
\end{theorem}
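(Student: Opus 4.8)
The plan is to show that the induced distribution (\ref{eq:fa_induced}) fixes a short list of identifiable quantities, and that the norm, diagonalization, and sign conditions leave exactly one admissible parameter set reproducing them. First I would read off what the distribution determines. Conditioning on $\mathbf y$, the law $p(\mathbf x\mid\mathbf y)$ is Gaussian with covariance $\Sigma_x=\Psi+WW^T$ and mean $\bm{\mu}_x+WG^T\mathbf y$; evaluating at $\mathbf y=\bm{0}$ and reading the term linear in $\mathbf y$ yields $\bm{\mu}_x$, $\Sigma_x$, and the matrix $WG^T$. The marginal $p(\mathbf y)$ is the Ising model whose identifiable natural parameters are the off-diagonal couplings $[GG^T]_{ij}$ ($i\neq j$) and the combined field $\bm{\theta}\equiv\mathbf b+\tfrac12\mathrm{diag}(GG^T)$, while $\mathrm{diag}(GG^T)$ and $\mathbf b$ remain confounded in the field. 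Hence any two admissible parameter sets inducing the same distribution share $\Sigma_x$, $WG^T$, the off-diagonal part $S_y\equiv\mathrm{offdiag}(GG^T)$, and $\bm{\theta}$.

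Next I would use the norm constraint to collapse the remaining freedom onto the single scalar $c$. Writing $D_x=\mathrm{diag}(\Sigma_x)$ and $C_x=D_x^{-1/2}\Sigma_x D_x^{-1/2}$, the continuous constraint $\mathrm{diag}(\Psi^{-1/2}WW^T\Psi^{-1/2})=c^2 I$ forces $\Psi=D_x/(1+c^2)$, whence $\Psi^{-1/2}WW^T\Psi^{-1/2}=(1+c^2)C_x-I$ and the off-diagonal block $\Psi^{-1/2}WG^T=\sqrt{1+c^2}\,Q_0$ with $Q_0\equiv D_x^{-1/2}WG^T$; the binary constraint gives $\mathrm{diag}(GG^T)=c^2 I$, so $GG^T=S_y+c^2 I$ and $\mathbf b=\bm{\theta}-\tfrac12 c^2\bm{1}$. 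Thus the Gram matrix $MM^T$ becomes an explicit function $MM^T(c)$ of the single unknown $c$, assembled entirely from identifiable data, and it remains to show that $c$, and then $M$, are unique.

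The heart of the argument, and the step I expect to be hardest, is uniqueness of $c$. Since $\Omega^2=M^TM$ is nonsingular, $M$ has full column rank $q$, so $MM^T(c)$ is positive semidefinite of rank $q$, i.e. has exactly $p_x+p_y-q\ge 0$ zero eigenvalues; this is where the hypothesis $q\le p_x+p_y$ supplies the determining equation for $c$. Conjugating by the invertible diagonal matrix $D=\mathrm{diag}((1+c^2)^{-1/2}I_{p_x},I_{p_y})$ gives
\begin{align}
D\,MM^T(c)\,D=\Gamma+\mathrm{diag}\!\Bigl(\tfrac{c^2}{1+c^2}I_{p_x},\,c^2 I_{p_y}\Bigr),\qquad \Gamma\equiv\begin{bmatrix} C_x-I & Q_0\\ Q_0^T & S_y\end{bmatrix},
\end{align}
where $\Gamma$ is $c$-independent and fully identifiable. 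Congruence preserves rank and inertia, so $MM^T(c)$ is positive semidefinite of rank $q$ iff $D\,MM^T(c)\,D$ is. Because $c^2/(1+c^2)$ and $c^2$ are both strictly increasing with strictly positive derivative for $c>0$, the added diagonal matrix is strictly increasing in the Loewner order, so by Weyl's monotonicity theorem every eigenvalue of $D\,MM^T(c)\,D$ is strictly increasing in $c$ and therefore vanishes for at most one value of $c$. Consequently the required $p_x+p_y-q$ eigenvalues can vanish simultaneously, with the matrix positive semidefinite, for at most one $c$, namely the true one; this pins down $c$, hence $\Psi$, $GG^T$, and $MM^T$ uniquely.

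Finally I would fix the residual rotational and sign freedom. Any two full-column-rank factorizations of the now-determined $MM^T$ differ by an orthogonal matrix, $M'=MR$; imposing that $M^TM=\Omega^2$ be diagonal with distinct, descending eigenvalues forces $R$ to be a diagonal matrix of signs, and the nonnegativity of the column sums~(\ref{eq:G_sign}) then fixes each sign, so $M$ is unique. Its top $p_x$ and bottom $p_y$ row blocks recover $\Psi^{-1/2}W$ and $G$, whence $W=\Psi^{1/2}(\Psi^{-1/2}W)$; together with $\mathbf b=\bm{\theta}-\tfrac12 c^2\bm{1}$ and the already-identified $\bm{\mu}_x$, every parameter is determined, which establishes identifiability.
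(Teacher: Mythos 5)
Your proof is correct in the regime $q < p_x + p_y$ and reaches the key step---uniqueness of the common row norm $c$---by a genuinely different route from the paper. The paper proceeds block by block: it identifies $c$ from the binary block alone via Lemma~\ref{lemma:1} (since $GG^T$ is rank-deficient, the smallest eigenvalue of the identified matrix $GG^T - c^2 I$ equals $-c^2$), which requires the stronger condition $q < p_y$; it then recovers $B$, assembles the identified block matrix $\mathrm{diag}(I, 2B) + c^2 \hat{M}\hat{M}^T$, and obtains $\hat{M}$ from the uniqueness of the truncated spectral decomposition (Lemma~\ref{lemma:2}). You instead express $MM^T(c)$ entirely in terms of identifiable quantities and the single unknown $c$, and pin $c$ down by a rank argument on the whole combined matrix: after the congruence to $\Gamma + \mathrm{diag}\bigl(\tfrac{c^2}{1+c^2} I_{p_x},\, c^2 I_{p_y}\bigr)$, Weyl monotonicity makes every eigenvalue strictly increasing in $c$, so the smallest eigenvalue can vanish for at most one value of $c$. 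Your closing rotation-and-sign step is the same content as Lemma~\ref{lemma:2}, argued directly through the orthogonal ambiguity of rank-$q$ Gram factorizations. The payoff of your route is that it needs only rank deficiency of the combined matrix ($q < p_x + p_y$), whereas the paper's own proof needs rank deficiency of the binary block ($q < p_y$), a hypothesis that does not even appear in the theorem statement; your argument therefore covers strictly more cases, e.g.\ $p_y \le q < p_x + p_y$, where the paper's Lemma~\ref{lemma:1} step is unavailable.

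One caveat: at the boundary $q = p_x + p_y$, which the stated hypothesis $q \le p_x + p_y$ nominally allows, your determining equation for $c$ disappears---no zero eigenvalues are required, so the rank condition is vacuous---and identifiability of $c$ genuinely fails there (already for $p_x = p_y = 1$, $q = 2$ one can construct a continuum of admissible parameter sets with different $c$ inducing the same observed distribution). So your proof, like the paper's, actually establishes the theorem only under a strict inequality; this mismatch is a defect of the theorem statement itself rather than of your argument, but your write-up should claim the strict inequality it actually uses instead of suggesting that the case $q = p_x + p_y$ is covered.
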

The proof of the theorem~\ref{theorem:1} is provided in the appendix, along with the proof of identifiability for the case involving only binary variables.

When the matrix $M^T M$ is not diagonalized, the rotation matrix $R$ to satisfy the above orthogonality condition can be constructed by arranging the eigenvectors in columns, $R=[\mathbf{v}_1, \mathbf{v}_2, \dots, \mathbf{v}_{p_z}]$, where $\mathbf{v}_s$ is a column eigenvector of the matrix $M^T M$.

\subsection{Contribution ratio}
In this subsection, we define the contribution ratio of the latent space.
We define the contribution ratio of the latent space based on its contribution to the reconstruction error of the model parameters, measured by the squared Frobenius norm.
Since the factor loading matrix $W$ depends on the scale of the variables, it is convenient to define scale-independent (dimensionless) parameter by dividing each row of the matrix $W$ by the square root of the corresponding unique variance $\Psi^{1/2}$.
We define a combined factor loading matrix as 
\begin{align}
    M = \begin{bmatrix} \Psi^{-1/2} W; \, G \end{bmatrix},
\end{align}
where the semicolon denotes the vertical concatenation of the matrix.
By using the combined factor loading matrix $M$, the parameters of the proposed factor analysis are collectively represented by $M M^T$ as shown in Appendix~\ref{sec:appendix_proof_2}.
When we approximate the combined factor loading matrix by the truncated singular value decomposition (SVD):
\begin{align}
	M =& U \Omega V^T = \sum_{i=1}^D \omega_i \mathbf{u}_i \mathbf{v}_i^T + \sum_{i=D + 1}^{p_z} \omega_i \mathbf{u}_i \mathbf{v}_i^T, \notag \\
	=& M' + \sum_{i=D + 1}^{p_z} \omega_i \mathbf{u}_i \mathbf{v}_i^T
	=
	U \Omega' V^T + \sum_{i=D + 1}^{p_z}  \omega_i \mathbf{u}_i \mathbf{v}_i^T,
\end{align}
where $D$ is an integer satisfying $1 \le D \le p_z$, $\omega_i$ is a singular value, and $\mathbf{u}_i$ and $\mathbf{v}_i$ are left and right singular vectors, respectively, it can be shown that the squared Frobenius norm is given by
\begin{align}
	|| M - M' ||_F^2 =& \mathrm{Tr} \bigl[ (M - M') (M - M')^T \bigr], \notag \\
	=&
	\mathrm{Tr} \bigl[ U \Omega^2 U^T - U \Omega'^2 U^T \bigr], \notag \\
	=&
	\mathrm{Tr} \Biggl[ \sum_{i=D+1}^{p_z} \omega_i^2 \mathbf{u}_i \mathbf{u}_i^T \Biggr]
	=
	\sum_{i=D + 1}^{p_z} \omega_i^2,
\end{align}
where $\mathrm{Tr}$ denotes the matrix trace and we have used the relation $\mathrm{Tr}[ \mathbf{u}_i \mathbf{u}_i^T ] = 1$.
From the above equation, we see that the sum of the eigenvalues $\omega_s^2$ for the discarded eigenspace represents the squared Frobenius norm of the matrix difference $M - M'$.
Therefore, the eigenvalues of the matrix $M^T M$, $\omega_s^2$, can be used to define the contribution and cumulative contribution ratio of the latent space $P_s$ and $C_s$, respectively, as
\begin{align}
    P_s \equiv \frac{\omega_s^2}{\sum_{t=1}^{p_z} \omega_t^2}, \hspace{1cm}
    C_s \equiv \sum_{t=1}^s P_t, \label{eq:contribution_ratio}
\end{align}
where the axis of the latent space, the latent factor dimension, is sorted in descending order of the contribution ratio $P_s$.

When all observed variables are continuous, the contribution ratio can also be interpreted as the fraction of the variance of observed variables explained by the latent space.
This interpretation can be seen from the conditional distribution given latent variables.
From Eq.~(\ref{eq:fa_conditional}), we define
\begin{align}
    p(\mathbf{x} \mid \mathbf{z}) =& \mathcal{N}(\mathbf{x} \mid \bm{\eta}_x(\mathbf{z}) = \bm{\mu}_x + W \mathbf{z}, \Psi), \\
    \tilde{\bm{\eta}}_x(\mathbf{z}) \equiv & \Psi^{-1/2} \bm{\eta}_x(\mathbf{z}),
\end{align}
where the observational noise of the variable $\tilde{\bm{\eta}}_x(\mathbf{z})$ is scaled to have unit variance.
From this linear relation, we see that when the latent variable follows the standard normal distribution $\mathbf{z} \sim \mathcal{N}(\mathbf{z} \mid \bm{0}, I)$, the variance of the linear combination of the latent variable $\tilde{\bm{\eta}}_x(\mathbf{z})$ is calculated as follows:
\begin{align}
    \mathrm{Cov}[ \tilde{\bm{\eta}}_x ]
    =& \mathrm{E}\bigl[ \Psi^{-1/2} (\bm{\eta}_x - \bm{\mu}_x) (\bm{\eta}_x - \bm{\mu}_x)^T \Psi^{-1/2} \bigr], \notag \\
    = & \Psi^{-1/2} W \mathrm{E}[\mathbf{z} \mathbf{z}^T] W^T \Psi^{-1/2}, \notag \\
    \equiv & \tilde{W} \tilde{W}^T.
\end{align}
The total variance of $\tilde{\bm{\eta}}_x(\mathbf{z})$ is the sum of the diagonal elements of the above covariance matrix, which is equivalent to the sum of the eigenvalues of $\tilde{W} \tilde{W}^T$:
\begin{align}
	\mathrm{Tr} \bigl[ \mathrm{Cov}[ \tilde{\bm{\eta}}_x ] \bigr] =&
	\mathrm{Tr} \bigl[ \tilde{W} \tilde{W}^T \bigr] 
	= 
	\mathrm{Tr} \Biggl[ \sum_{i=1}^{p_z} \omega_i^2 \mathbf{u}_i \mathbf{u}_i^T \Biggr]
	=
	\sum_{i=1}^{p_z} \omega_i^2,
\end{align}
where $\omega_s^2$ and $\mathbf{u}_s$ are eigenvalues and eigenvectors of the matrix $\tilde{W} \tilde{W}^T$, respectively.

Here, we transform the covariance matrix $\Sigma_x$ into a more familiar form, in terms of communalities.
We define the diagonal matrix $C$ by the row norm of the dimensionless factor loading matrix $\tilde{W} = \Psi^{-1/2} W$ as
\begin{align}
    C = [\mathrm{diag}(\tilde{W} \tilde{W}^T) ]^{1/2}.
\end{align}
Using the matrix $C$, we express the covariancee matrix $\Sigma_x$ as
\begin{align}
    \Sigma_x = \Psi + \Psi^{1/2} C \hat{W} \hat{W}^T C \Psi^{1/2}.
\end{align}
From the above equation, we read $\mathrm{diag}(\Sigma_x) = (I + C^2) \Psi$.
Then, the covariance matrix is represented by
\begin{align}
    \Sigma_x =& 
    \frac{I}{I + C^2} \mathrm{diag}(\Sigma_x) + \mathrm{diag}(\Sigma_x)^{1/2} \sqrt{\frac{C^2}{I + C^2}} \hat{W} \hat{W}^T \sqrt{\frac{C^2}{I + C^2}} \mathrm{diag}(\Sigma_x)^{1/2}, \notag \\
    =&
    \bigl( I - H^2 \bigr) \mathrm{diag}(\Sigma_x) + \mathrm{diag}(\Sigma_x)^{1/2} H \hat{W} \hat{W}^T H \mathrm{diag}(\Sigma_x)^{1/2}, \\
    H \equiv & \mathrm{diag}(h_j) \equiv \sqrt{\frac{C^2}{I + C^2}}, \hspace{0.5cm} j \in \{1, 2, \dots, p_x \}.
\end{align}
By convention, we refer to $h_j$ as the communality.
The communality satisfies $0 \le h_j \le 1$ and represents the fraction of the correlations in observed variables that is accounted for by the latent factors.
Thus, our norm constraint corresponds to the constraint that the communalities for scaled variables $\mathrm{diag}(\Sigma_x)^{-1/2} \mathbf{x}$ are the same across all features.

We briefly mention the correspondence between conventional factor analysis and non-probabilistic PCA.
For this purpose, we consider the limit where the unique variances approach zero, $\Psi= (I - H^2) \mathrm{diag}(\Sigma_x) \rightarrow O$.
This limit is equivalent to taking the limit of $H \rightarrow I$ and $C \rightarrow \infty I$.
In this case, the observed variables have one-to-one correspondence with the latent factors.
In fact, in the limit where the observational noise is zero, the covariance matrix of the posterior distribution, Eq.~(\ref{eq:factor_score_variance}), approaches to zero:
\begin{align}
    \Sigma_{z|x} =& (\Sigma_z^{-1} + W^T \Psi^{-1} W)^{-1} = (\Sigma_z^{-1} + C \hat{W}^T \hat{W} C)^{-1} \rightarrow O.
\end{align}
Furthermore, factor scores are obtained as the projection of the standardized observed variables by the following matrix:
\begin{align}
    \mathbf{m} =&
    \bm{\mu}_z + (\Sigma_z^{-1} + C \hat{W}^T \hat{W} C)^{-1} \hat{W}^T C \Psi^{-1/2} (\mathbf{x} -\bm{\mu}_x), \notag \\
    =& 
    \bm{\mu}_z + (\Sigma_z^{-1} + C \hat{W}^T \hat{W} C)^{-1} \hat{W}^T C (1 + C^2)^{1/2} \mathrm{diag}(\Sigma_x)^{-1/2} ( \mathbf{x} -\bm{\mu}_x), \notag \\
    \rightarrow & \bm{\mu}_z + (\hat{W}^T \hat{W})^{-1} \hat{W}^T \mathrm{diag}(\Sigma_x)^{-1/2} (\mathbf{x} - \bm{\mu}_x).
\end{align}
In other words, the factor scores can be interpreted as the coordinate components of the standardized observed variables $\mathrm{diag}(\Sigma_x)^{-1/2}(\mathbf{x} - \bm{\mu}_x)$ in the basis vectors, the column vectors of matrix $\hat{W}$.

\section{Numerical validation} \label{sec:numerical_validation}
In this section, we numerically validate the proposed factor analysis using real and artificial datasets.

Here, we discuss a method of parameter estimation from observed data $\mathcal{D} \equiv \{(\mathbf{x}_1, \mathbf{y}_1), (\mathbf{x}_2, \mathbf{y}_2), \dots, (\mathbf{x}_N, \mathbf{y}_N) \}$.
In this paper, we adopt parameter estimation based on maximum likelihood estimation.
In the proposed model, the log-likelihood function $l(\bm{\theta})$ is analytically expressed as
\begin{align}
    l(\bm{\theta}) =& \sum_{i=1}^N \log p(\mathbf{x}_i, \mathbf{y}_i \mid \bm{\theta}), \hspace{0.5cm}
    \bm{\theta} = (\bm{\mu}_x, \Psi, W, \mathbf{b}, G).
\end{align}
Since the optimization of the likelihood function is a nonlinear optimization problem with numerous local maxima, it is difficult to obtain the maximum likelihood estimator.
Therefore, in this paper, we randomly generated initial parameters more than 100 times and numerically solved the optimization problem for each initial parameter, and then adopted the parameter with the highest likelihood as the optimal solution.

\subsection{Application to real datasets}
In this subsection, we numerically validate the proposed factor analysis and the norm constraint prescription using a publicly available real dataset, HIV drug resistance data.
We analyzed the mutation of amino acid sequences of Human Immunodeficiency Virus (HIV) type-1.
The dataset was obtained from the HIV drug resistance database~\cite{hiv}.
Details of the database and related datasets can be found in \cite{Rhee2003}.
When an antiretroviral drug is dosed on a patient, the virus becomes resistant to the drug over time by mutating its genes.
This mutation has been observed to be highly cooperative, with each residue in the amino acid sequence mutating not simply stochastically~\cite{Ohtaka2003}.
Although the molecular mechanism of drug resistance has not yet been elucidated, it is expected that the relationship between the correlation pattern of mutations and drug resistance will provide clues to the molecular mechanism of drug resistance.
We focused on viral resistance to protease inhibitors.
The data for analysis consists of mutational information on residues of amino acid sequences from position 1 to 99 in protease of viruses isolated from plasma of HIV-1 infected patients, represented by \texttt{P1} to \texttt{P99}, and in vitro susceptibility to various protease inhibitors such as Nelfinavir.
As a preprocessing step, the residues of amino acid sequences were encoded to $1$ if any mutation, such as insertion, deletion, or substitution from the consensus wild-type amino acid sequence, is present, and encoded to $0$ if there is no mutation from the consensus sequence.
In other words, information on the type of mutation was ignored.
The variables of the drug susceptibility were transformed using a natural logarithmic function to approximately satisfy the normality assumption of the proposed factor analysis.

To reduce the model complexity relative to the sample size, we chose three continuous variables and eight binary variables out of the variables of the drug resistance and the residues of amino acid sequences.
The three continuous variables are the susceptibility to the protease inhibitors: Indinavir (IDV), Nelfinavir (NFV) and Squinavir (SQV).
Among them, Indinavir is used for coloring the scatter plot of the factor scores.
These variables were selected in terms of having few missing values.
The eight binary variables are the residues of amino acid sequences at the positions $\texttt{P63}$, $\texttt{P10}$, $\texttt{P71}$, $\texttt{P36}$, $\texttt{P46}$, $\texttt{P54}$, $\texttt{P93}$ and $\texttt{P62}$, which were selected from the 99 residues in descending order of mutation rate.
The proposed model naturally handles missing values since it is based on probability distributions.
However, we used only the data without missing values in the mutations and the drug susceptibility to reduce the complexity of programming implementation.
We also excluded data with multiple isolates for the same patient.
The sample size of the dataset after data cleansing was $N=1906$.
Figure~\ref{fig:hiv_descriptive} shows the descriptive statistics of the HIV data.
We observe that there are strong correlations among mutations in the amino acid residues and drug resistance.

\begin{figure}[htbp]
    \centering \hspace*{-0cm}
    \includegraphics[scale=1, pagebox=cropbox, viewport= 0 0 448.959 193.934, clip]{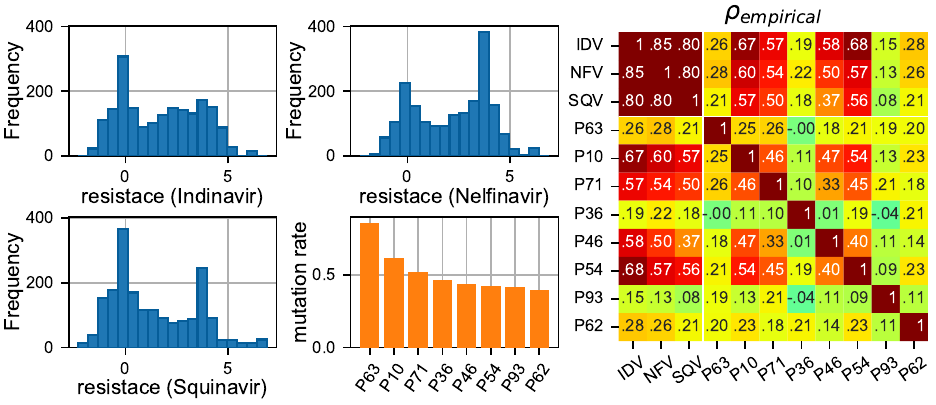}
    \caption{
    Descriptive statistics of the HIV drug resistance data: histogram of the continuous variables, means of the binary variables and correlation matrix, where the mean and correlation for binary variables are defined using dummy variables.
    }
    \label{fig:hiv_descriptive}
\end{figure}

We performed the proposed factor analysis on the mutation data of the amino acid sequences in HIV-1 protease.
Model parameters were estimated by maximum likelihood estimation.
The Bayesian information criterion (BIC) was used to determine the number of latent dimensions~\cite{BIC}.
We also performed factor analysis through the method of quantification for comparison.
The BIC as a function of the number of latent dimensions is shown in Fig.~\ref{fig:bic}.
The number of latent dimensions selected for the proposed model was four.
In the following analysis, the number of latent dimensions for the factor analysis with the method of quantification were set to the same value as that of the proposed factor analysis for comparison and the norm constraint prescription was applied as well.

\begin{figure}[htbp]
    \centering
    \includegraphics[scale=1, pagebox=cropbox, viewport= 0 0 348.710 139.367, clip]{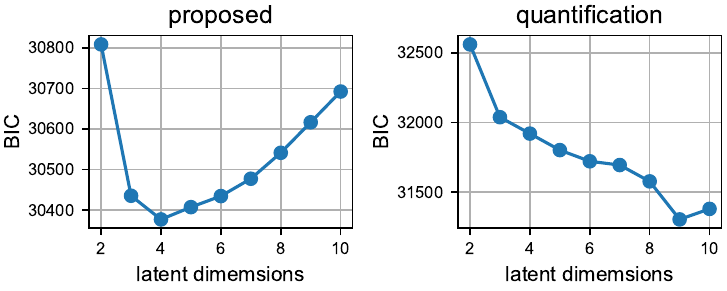}
    \caption{
    The BIC as a function of the number of latent dimensions for the proposed factor analysis (Left) and factor analysis with quantification (Right).
    }
    \label{fig:bic}
\end{figure}

Factor analysis allows us to visualize the relationship between data points and features, which is known as a biplot.
In the biplot, the factor scores of each data point $\mathbf{m}_i, \; (i=1,2,\dots, N)$, Eq.~(\ref{eq:factor_score}), are depicted as a scatter plot in Euclidean space, and the dimensionless factor loading vectors, which are defined by $c \, \hat{\mathbf{w}}_j$ and $\mathbf{g}_j = c \, \hat{\mathbf{g}}_j$, are depicted as arrows.
When the observed variables consist exclusively of binary variables, the factor score consists of $2^q$ possible combinations of the factor loading vectors $\mathbf{g}_j$.
The Euclidean distance between two data points represents the similarity between them.
The inner product of the arrows of two features represents the similarity between them.
Our norm constraint prescription is also convenient in comparing features with each other in the biplot, since the arrows of all features are the same length.
The inner product of data point $\mathbf{m}_i$ and the dimensionless factor loading vector $c \, \hat{\mathbf{w}}_j$ and $\mathbf{g}_j$ means that the corresponding feature is relatively larger or more likely to occur than mean value at that data point.

The numerical analysis demonstrated that both models exactly reproduce the empirical mean of the data.
Figure~\ref{fig:biplot_hiv} shows comparisons of the correlation matrix of the HIV data with those reproduced by the both models and biplots of factor analysis.
We see that our model successfully reproduces the empirical correlation.
In the biplots, the first and the second latent dimensions are displayed, and the percentages in the axis labels represent the contribution ratio, Eq.~(\ref{eq:contribution_ratio}).
From the biplot, we see that the first latent dimension can be interpreted as the resistance to the protease inhibitor, and the second and subsequent latent dimensions appear to be irrelevant to drug resistance.
We also see qualitative similarities between the proposed factor analysis and factor analysis through the method of quantification.
However, these two methods showed quantitative differences.
In particular, the contribution ratio of the latent dimension was different between these methods.

\begin{figure}[htbp]
    \centering
    \includegraphics[scale=1, pagebox=cropbox, viewport= 0 0 389.388 392.939, clip]{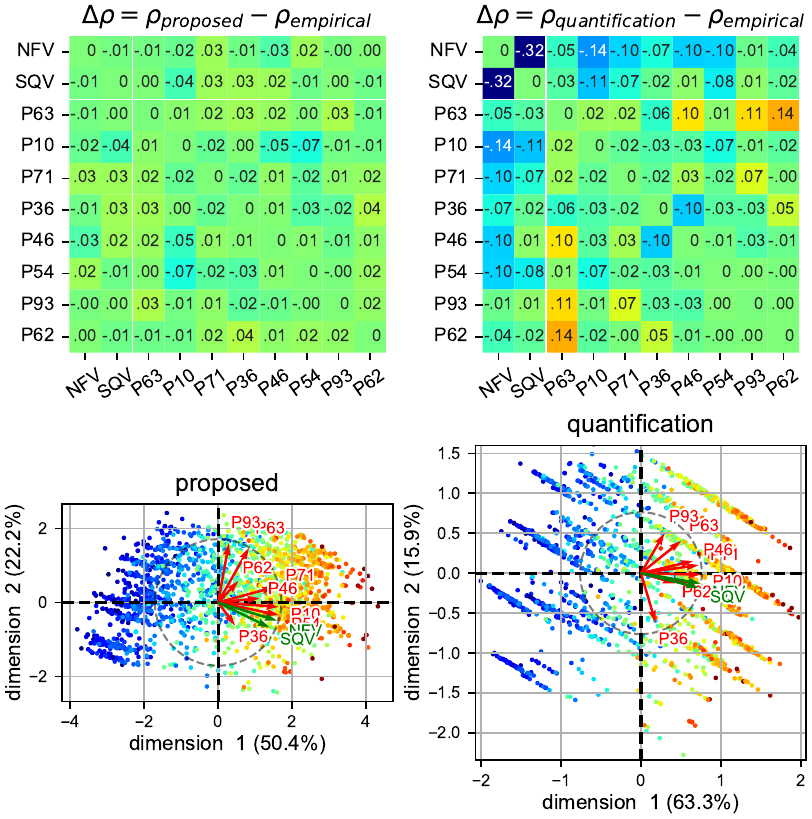}
    \caption{
    Correlation matrices reproduced by the models and biplots of HIV-1 protease mutation data.
	The markers in the scatter plot are colored from blue to red according to weak to strong resistance to the protease inhibitor (Indinavir).
    The dashed circle represents the maximum possible length of the factor loading vectors.
	The arrow lengths of the factor loading vectors are scaled to match the points of the factor scores $\mathbf{m}$.
    }
    \label{fig:biplot_hiv}
\end{figure}

\subsection{Validation with artificially generated datasets}
The analysis of real data in the previous section did not allow us to examine in detail how the proposed method and quantification method differ.
Therefore, in this section, we first quantitatively compare the representational capability of the proposed method and the quantification method using artificial data.
We artificially generated continuous and binary variables and checked whether the model can accurately describe the correlation of the generated data.

We used the following procedure to generate artificial data for mixed continuous and binary variables using random numbers.
To generate artificial data for mixed-type variables, we first generated correlated continuous variables from a multivariate standard normal distribution.
Binary variables were then obtained by converting the multivariate continuous variables into dichotomous variables through thresholding.
The parameters of the model, used to create the data, were generated using random numbers as follows.
We generated the covariance matrix of the multivariate standard normal distribution using spectral decomposition.
First, the eigenvalues of the positive definite symmetric matrix were generated using a Gamma distribution $\mathrm{Gamma}(\alpha=1, \beta=1)$.
Next, we generated vectors with elements following the standard normal distribution and orthonormalized them using the Gram-Schmidt orthonormalization to obtain the orthonormal eigenvectors for each eigenvalue.
Using these eigenvalues and eigenvectors, we formed a positive definite matrix through spectral decomposition.
This positive definite symmetric matrix was then standardized to obtain the covariance matrix of the multivariate standard normal distribution.
The continuous variables generated with this covariance matrix were dichotomized by applying thresholds based on specific percentiles of the standard normal distribution.
The thresholds used for dichotomization were generated from the continuous uniform distribution ranging from zero to one, $\mathrm{Unif}(0, 1)$.

We generated 500 sets of artificial data, each with a sample size of 1000.
In that data, there are five continuous variables and five binary variables, making a 10-dimensional dataset.
We excluded the artificial data in which any of the binary variables had an average of either zero or one, since correlation cannot be defined for such data.
We also excluded data in which the maximum correlation between continuous and binary variables and between binary variables did not exceed 0.5 simultaneously.
The model parameters were estimated using maximum likelihood estimation.
The models of the proposed and quantification methods were then fitted to the artificially generated data by maximum likelihood estimation, and the reproducibilities of correlations by the models were examined.

Figures~\ref{fig:reproducibility_proposed} and~\ref{fig:reproducibility_quantification} show scatter plots comparing the empirical correlations of the synthetic data and correlations reproduced by the models.
In these figures, the coefficient of determination is defined by
\begin{align}
    R^2 = 1 - \frac{\sum_{i > j} (r_{ij} - \hat{r}_{ij})^2}{\sum_{i > j} (r_{ij} - \bar{r}_{ij})^2}, \hspace{0.5cm} i, j \in \{1, 2, \dots, p_x + q_y \},
\end{align}
where $r_{ij}$ and $\hat{r}_{ij}$ are empirical correlation and correlation reproduced by the model, respectively, and $\bar{r}_{ij}$ is a mean value of the empirical correlations.
In comparison with quantification methods, it is observed that the proposed method achieves better reproducibility of correlations than the quantification methods.
We observe the disadvantage of the quantification method that it is more difficult to reproduce the correlations involving binary variables compared to those involving only continuous variables.
This drawback arises when the representational capacity is insufficient due to the small number of the latent dimensions, and when the variance of the binary variable is small, which corresponds to cases where the mean of the binary variable is close to either 0 or 1.
Specifically, the quantification method tends to overestimate correlations involving binary variables when the variance of those binary variables is small. 
This suggests that the discreteness of binary variables becomes particularly pronounced when the variance of binary variables is small, and at the same time, it indicates that the assumptions underlying the quantification method become invalid.

\begin{figure}[htbp]
    \centering
    \includegraphics[scale=1, pagebox=cropbox, viewport= 0 0 389.356 394.834, clip]{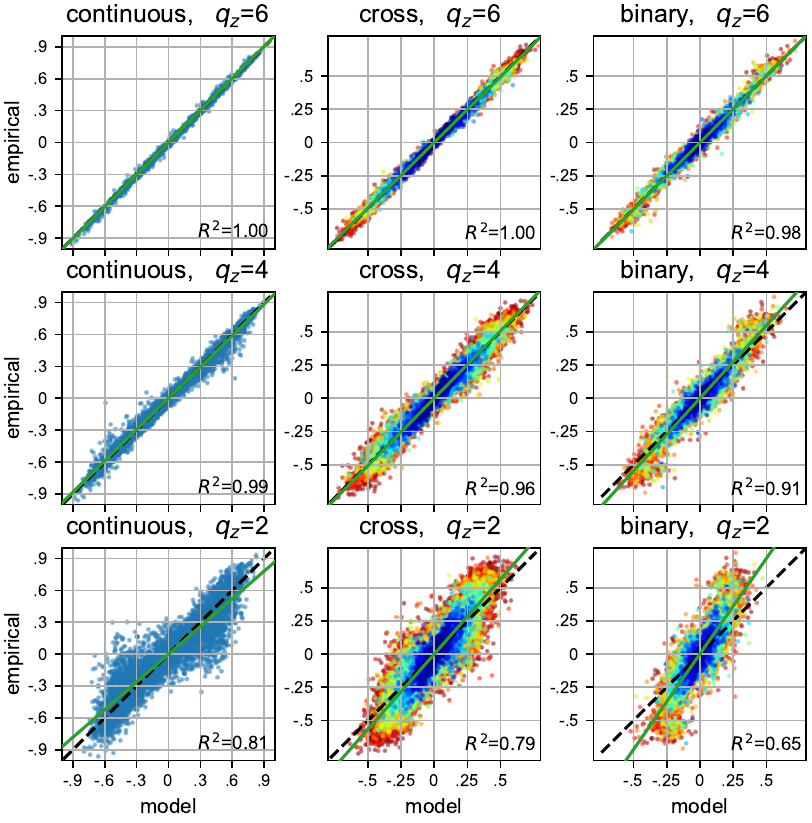}
    \caption{
    Scatter plots of empirical correlations and correlations reproduced by the proposed factor analysis as a function of the number of latent dimensions.
	The green solid lines represent the regression lines obtained by regressing the empirical correlation using the correlation reproduced by the model.
	$R^2$ indicates the coefficient of determination between empirical correlations and correlations reproduced by the models.
	In correlations involving binary variables, the point colors change from blue to red as the variance of the binary variable increases.
    In correlations between binary variables, blue points represent variables for which the mean of one of the binary variables is less than 0.1 or greater than 0.9.
    In correlations between continuous and binary variables, blue points represent variables for which the mean of the binary variable is less than 0.1 or greater than 0.9.
    }
    \label{fig:reproducibility_proposed}
\end{figure}

\begin{figure}[htbp]
    \centering
    \includegraphics[scale=1, pagebox=cropbox, viewport= 0 0 389.356 394.834, clip]{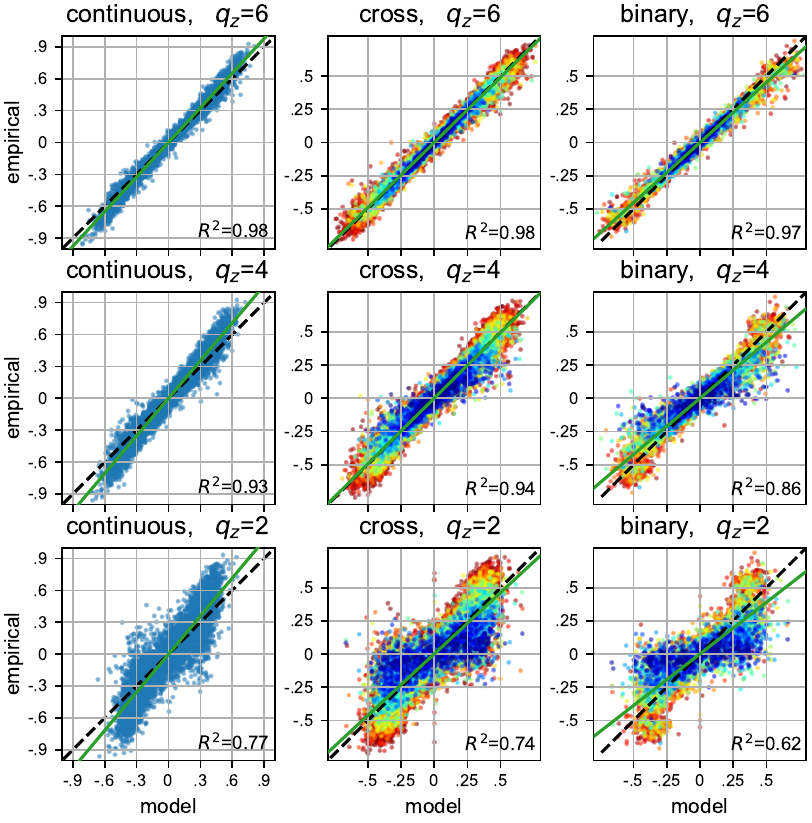}
    \caption{
	Scatter plots of empirical correlations and correlations reproduced by the factor analysis with the method of quantification as a function of the number of latent dimensions.
    The green solid lines represent the regression lines obtained by regressing the empirical correlation using the correlation reproduced by the model.
	$R^2$ indicates the coefficient of determination between empirical correlations and correlations reproduced by the models.
	In correlations involving binary variables, the point colors change from blue to red as the variance of the binary variable increases.
    In correlations between binary variables, blue points represent variables for which the mean of one of the binary variables is less than 0.1 or greater than 0.9.
    In correlations between continuous and binary variables, blue points represent variables for which the mean of the binary variable is less than 0.1 or greater than 0.9.
    }
    \label{fig:reproducibility_quantification}
\end{figure}

However, there is a way to avoid the drawback of the quantification.
This way involves fitting the data using a model with a sufficiently large number of latent dimensions and then removing the information from unnecessary latent dimensions.
We first parametrize the covariance matrix as
\begin{align}
	\Sigma_x =& \bigl[ I - \mathrm{diag}(\bar{W} \bar{W}^T ) \bigr] \mathrm{diag}(\Sigma_x) + \mathrm{diag}(\Sigma_x)^{1/2} \bar{W} \bar{W}^T \mathrm{diag}(\Sigma_x)^{1/2}, \\
    \bar{W} \equiv & \sqrt{\frac{c^2}{1 + c^2}} \hat{W},
\end{align}
and fit the model with a sufficiently large number of latent dimensions $p_z = p_x + q$.
Then, we drop the information from the latent dimensions, the columns of the factor loading matrix $\bar{W}$, so that the diagonal elements of the covariance matrix $\Sigma_x$ remain unchanged.
Figure~\ref{fig:reproducibility_quantification_max_reduce} shows scatter plots of the empirical correlations and correlations reproduced by the quantified method, where the model were fit with $q_z=9$ and then reduced the latent dimensions.
This model can reproduce correlations as well as the proposed model, however the characteristics of attempting to overestimate correlations involving binary variables when the variance of the binary variable is small still remains.

\begin{figure}[htbp]
    \centering
    \includegraphics[scale=1, pagebox=cropbox, viewport= 0 0 389.356 394.834, clip]{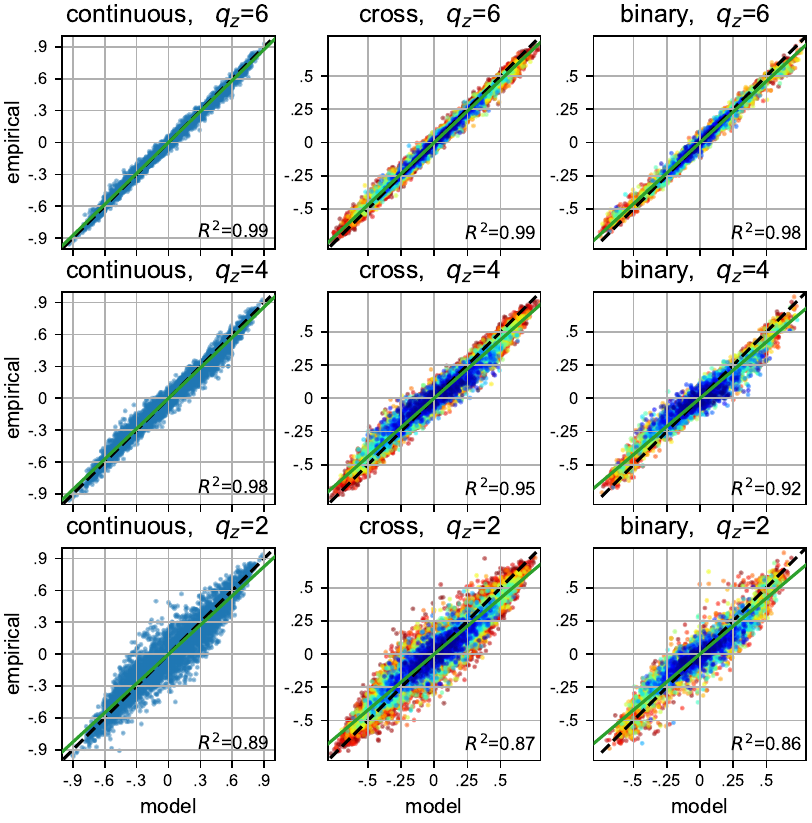}
    \caption{
	Scatter plots of empirical correlations and correlations reproduced by quantification factor analysis, in which the latent dimensions are reduced, as a function of the number of latent dimensions.
	The green solid lines represent the regression lines obtained by regressing the empirical correlation using the correlation reproduced by the model.
	$R^2$ indicates the coefficient of determination between empirical correlations and correlations reproduced by the models.
	In correlations involving binary variables, the point colors change from blue to red as the variance of the binary variable increases.
    In correlations between binary variables, blue points represent variables for which the mean of one of the binary variables is less than 0.1 or greater than 0.9.
    In correlations between continuous and binary variables, blue points represent variables for which the mean of the binary variable is less than 0.1 or greater than 0.9.
    }
    \label{fig:reproducibility_quantification_max_reduce}
\end{figure}

\subsection{Sampling distribution of the maximum likelihood estimates}
In this subsection, we show that the proposed factor analysis reproduces the underlying latent structure when the sample size is large.
We also investigate the sampling distribution of the maximum likelihood estimates.

To this end, we first adopt the model and factor scores obtained by estimating the artificial data generated in the previous subsection as the ground truth.
The sample size of the dataset is $N=1000$, and the number of latent dimensions is six.
We generated synthetic observed data $(\mathbf{x}_i, \mathbf{y}_i), \hspace{0.2cm} i=1,2,\dots,1000$ using the conditional distribution $p(\mathbf{x}, \mathbf{y} \mid \mathbf{z})$ based on the ground truth model and factor scores $\mathbf{m}_i, \hspace{0.2cm} i=1,2,\dots,1000$.
We then applied the proposed factor analysis to the synthetic data to examine whether it could successfully reproduce the ground truth.
Figure~\ref{fig:artificial_ground_truth} shows the biplots for the ground truth model and proposed model, where the feature indices from zero to four correspond to binary variables and those from five to nine correspond to continuous variables.
It can be seen that our method reproduces the ground truth.

\begin{figure}[htbp]
    \centering
    \includegraphics[scale=1, pagebox=cropbox, viewport= 0 0 391.388 175.533, clip]{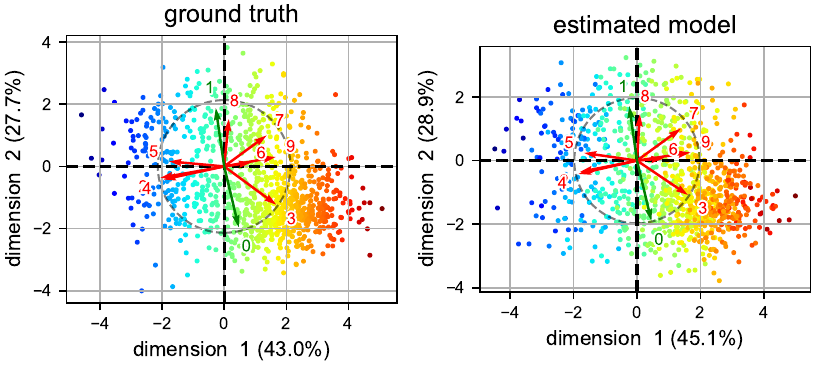}
    \caption{
    Correlation matrix and biplot of the ground truth model (Left) and correlation matrix and biplot from the model (Right).
	The markers in the scatter plot are colored from blue to red according to small to large factor scores of the first principal axis $z_1$ in the ground truth data.
    The dashed circle represents the maximum possible length of the factor loading vectors.
	The arrow lengths of the factor loading vectors are scaled to match the points of the factor scores $\mathbf{m}$.
    }
    \label{fig:artificial_ground_truth}
\end{figure}

Next, we investigate the sampling distribution of the maximum likelihood estimates and examine the consistency of the model parameters.
To investigate the sampling distribution, we generated synthetic data from the joint distribution $p(\mathbf{x}, \mathbf{y})$, Eq.~(\ref{eq:fa_induced}), with the ground truth model parameter.
We generated 1000 sets of synthetic data with sample sizes of 1000, 3000, and 9000, respectively.
We then fitted the model to these datasets using maximum likelihood estimation.
Figure~\ref{fig:sampling_distribution} shows the sampling distribution of the model parameters.
We empirically observed that as the sample size increases, the estimates exhibit consistency and asymptotic normality.

\begin{figure}[htbp]
    \centering \hspace*{-0cm}
    \includegraphics[scale=1, pagebox=cropbox, viewport= 0 0 415.027 240.930, clip]{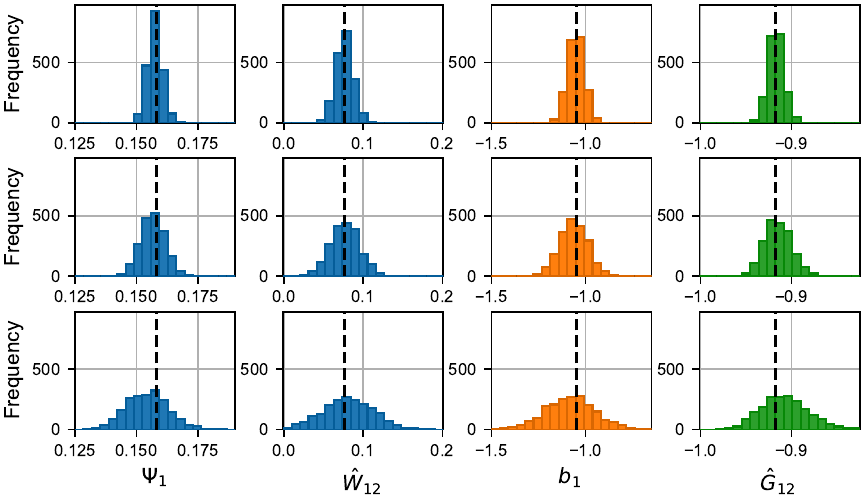}
    \caption{
        Sampling distribution of the maximum likelihood estimates for model parameters.
        The true parameters are indicated by the dashed lines.
        The sample sizes are 1000 for the lower row figures, 3000 for the middle row figures, and 9000 for the upper row figures.
    }
    \label{fig:sampling_distribution}
\end{figure}

\section{Conclusions} \label{sec:conclusion}
We developed factor analysis for mixed continuous and binary random variables.
The proposed factor analysis amounts to introducing a mixture of Gaussian distributions with specified parameters for a prior distribution over the latent variables.
The introduction of this mixture prior enables closed-form marginalization over the latent variables and allows the induced distribution to be expressed analytically.
Hence, the proposed method can easily perform parameter estimation even when the number of latent dimensions is large, which causes a difficulties in parameter estimation for the existing methods of binary Factor Analysis and multidimensional IRT.
On the other hand, the proposed method requires summing over all possible states when calculating the partition function.
Since the number of all possible states of binary variables exponentially increases as the number of binary variables increases, the issue of computational complexity arises when the number of binary variables is large.
We also discussed the instability of model parameters associated with maximum likelihood estimation in the proposed factor analysis, which is known as improper solutions or Heywood cases.
We proposed a prescription to fix this instability, which imposes the constraint that the row vectors of the factor loading matrix have the same norm for all features.
This is equivalent to imposing the communalities to have the same value for all features.
Note that all results in this paper were obtained by applying the norm constraint prescription.

By analyzing a real dataset, we numerically confirmed that this norm constraint prescription works well and prevents instability of estimated model parameters.
We also compared the proposed method with factor analysis with the quantification method on synthetic data.
The proposed method exhibited better reproducibility of correlations involving binary variables than the quantification method.
We identified a drawback of the quantification method through numerical experiments.
We found that the quantification method tends to overestimate the correlations involving binary variables with small variance when the number of latent dimensions is small.
In contrast, this tendency was not observed in the proposed method.

The proposed factor analysis can be used for compressing and visualizing data that include binary variables, owing to the uniqueness of the estimated model parameters ensured by the model's identifiability and the absence of improper solutions.
Furthermore, the method can also be used as a preprocessing step for dimensionality reduction and denoising of feature variables, before applying non-probabilistic machine learning methods such as support vector machines.
Although PCA and conventional factor analysis have become standard tools in data analysis for dimensionality reduction and preprocessing tasks, the justification for applying these methods to datasets containing binary features is debatable because they are inherently designed for continuous features.
The proposed factor analysis is expected to be a principled alternative to these conventional techniques when dealing with mixed continuous and binary variables.

\appendix

\section{Derivation of the proposed factor analysis} \label{sec:appendix_fa}
In this appendix, we first provide a summary of the multivariate distribution that describe correlations between continuous and binary variables, introduced in Ref.~\cite{Arai2022}.
Then, we derive the proposed factor analysis.

\subsection{Probability distribution for mixed continuous and binary variables}
We denote continuous and binary observed variables by $\mathbf{x}$ and $\mathbf{y}$.
Each variable is a column vector and its dimensions are $p$, $q$, respectively.
The distribution is parametrized by $(\mu, \Sigma, \Lambda, G)$.
The parameters $\mu$ and $\Sigma$ are a column vector and square matrix with dimension $p$ representing the mean and covariance of the continuous variables, respectively.
The parameter $\Lambda$ is a $q \times q$ matrix for the Grassmann distribution.
$\Lambda - I$ must be a $P_0$-matrix, i.e., all principal minors are nonnegative, where $I$ is an identity matrix.
The parameter $G$ is a $q \times p$ matrix that describes the correlation between continuous and binary variables.

To express the joint distributions, we define the notation of index labels.
We denote the set of whole indices of continuous and binary variables as $I\equiv \{1,2,\dots,p\}$ and $R \equiv \{1,2,\dots, q\}$, respectively.
The set of whole indices for binary variables is divided into two parts with subscripts of $1$ or $0$, $R = (R_1, R_0)$, for variables that take a value of $1$ or $0$, respectively.
Then, we write a subvector of the binary variables taking a value of $1$ and $0$ as $\mathbf{y}_{R_1}=\bm{1}$ and $\mathbf{y}_{R_0}=\bm{0}$, respectively.
Using these notations, the joint distribution can be calculated as follows:
\begin{align}
    p(\mathbf{x},\mathbf{y}) =& p(\mathbf{x}, \mathbf{y}_{R_1}=\bm{1}, \mathbf{y}_{R_0}=\bm{0}) , \notag \\
    =& \frac{1}{Z} \det (\Lambda_{R_0 R_0} -I) \, e^{\frac{1}{2} \mathbf{g}_{R_0}^T \Sigma \mathbf{g}_{R_0}} e^{-\frac{1}{2}(\mathbf{x}-\bm{\mu} + \Sigma \mathbf{g}_{R_0})^T \Sigma^{-1}(\mathbf{x}-\bm{\mu} + \Sigma \mathbf{g}_{R_0})}, \\
    G =&  \begin{bmatrix} \mathbf{g}_1, &  \mathbf{g}_2, & \dots, & \mathbf{g}_q \end{bmatrix}^T, \notag \\
    Z =& 
    (2\pi)^{p/2} \det \Sigma^{1/2} \sum_{R_0 \subseteq R} \det (\Lambda_{R_0 R_0}-I) e^{\frac{1}{2} \mathbf{g}_{R_0}^T \Sigma \mathbf{g}_{R_0}},
\end{align}
where $\Lambda_{R_0 R_0}$ is a submatrix of $\Lambda$, with its rows and columns consisting of the index set $R_0$, and we have defined a column vector $\mathbf{g}_{R_0} \equiv \sum_{s \in R_0} \mathbf{g}_s$.
The summation $\sum_{R_0 \subseteq R}$ runs over all possible principal minors.
That is, the partition function requires summing over all possible states for binary variables.
The above expression can be viewed as a parametrized location model~\cite{Olkin1961}.

To express the marginal and conditional distributions, we define the notation of index labels.
We denote the index label of a subset of indices as $J \subseteq I$. 
The subvector consisting of the subset of indices $J$ is represented by $\mathbf{x}_J$.
We divide the set of whole indices of continuous and binary variables $I$ and $R$ into three subset parts; $I=(J, L, K)$ and $R=(S, U, T)$, where the index labels for a set of indices $L$ and $U$ are introduced to handle missing values.
Hence, the index labels $L$ and $U$ may be understood as the initial letters of ``Latent'' and ``Unobserved'', respectively.
The number of elements in the set of indices is represented by $p_J$, $p_L$, $p_K$ and $q_S$, $q_U$, $q_T$, these satisfy $p_J + p_L + p_K = p$ and $q_S + q_U + q_T = q$.
Then, the vectors $\mathbf{x}$ and $\mathbf{y}$ can be partitioned into subvectors as $\mathbf{x}=\mathbf{x}_I = (\mathbf{x}_J, \mathbf{x}_L, \mathbf{x}_K)$ and $\mathbf{y}=\mathbf{y}_R=(\mathbf{y}_S, \mathbf{y}_U, \mathbf{y}_T)$, respectively.
Again, a subset of indices for binary variables, e.g., $S$, is divided into two parts; we write the index label for variables that take a value of $y_s=1$ and $y_s=0, \;\; (s \in S)$ as $S_1$ and $S_0$, and denote these variables as $\mathbf{y}_{S_1}$ and $\mathbf{y}_{S_0}$, respectively.
The union of the index label $J$ and $K$ is denoted as $J + K \equiv J \cup K$.

By marginalizing the variables $\mathbf{x}_{J+L}$ and $\mathbf{y}_{S+U}$, the marginal distribution can be calculated as follows:
\begin{align}
    p(\mathbf{x}_K, \mathbf{y}_T) =& p(\mathbf{x}_K, \mathbf{y}_{T_1}=\bm{1}, \mathbf{y}_{T_0}=\bm{0}), \notag \\
    =&
	\frac{1}{Z} (2\pi)^{p_{J+L}/2} \det \Sigma_{(J+L)|K}^{1/2} \notag \\
	& \hspace{0cm} \sum_{S_0 \subseteq S, \; U_0 \subseteq U} \det(\Lambda_{R_0 R_0} - I)  e^{\frac{1}{2} \mathbf{g}_{R_0}^T \Sigma \mathbf{g}_{R_0}}
    e^{-\frac{1}{2} (\mathbf{x}_K -\bm{\mu}_K + \Sigma_{KI} \mathbf{g}_{I R_0})^T \Sigma_{KK}^{-1} (\mathbf{x}_K -\bm{\mu}_K + \Sigma_{KI} \mathbf{g}_{I R_0})},
\end{align}
where
\begin{align}
	\Sigma_{(J+L) | K} \equiv \Sigma_{(J+L)(J+L)} - \Sigma_{(J+L) K} \Sigma_{KK}^{-1} \Sigma_{K (J+L)} 
\end{align}
is the Schur complement of the matrix $\Sigma$ with respect to $\Sigma_{KK}$, $\Sigma_{KK}^{-1} \equiv (\Sigma_{KK} )^{-1}$ is the inverse of the submatrix, and we have defined the division of the vector $\mathbf{g}_{R_0}^T$ as $\mathbf{g}_{R_0}^T = \mathbf{g}_{R_0 I}^T = (\mathbf{g}_{R_0 J}^T, \mathbf{g}_{R_0 L}^T, \mathbf{g}_{R_0 K}^T)$.
The conditional distribution with marginalized variables can be derived by dividing marginal distributions:
\begin{align}
    & p(\mathbf{x}_J, \mathbf{y}_{S_1}=\bm{1}, \mathbf{y}_{S_0}=\bm{0} |\mathbf{x}_K, \mathbf{y}_{T_1}=\bm{1}, \mathbf{y}_{T_0}=\bm{0}) 
    = \frac{p(\mathbf{x}_{J+K}, \mathbf{y}_{S_1+T_1}=\bm{1}, \mathbf{y}_{S_0+T_0}=\bm{0})}{p(\mathbf{x}_K, \mathbf{y}_{T_1}=\bm{1}, \mathbf{y}_{T_0}=\bm{0})}, \notag \\
    =& 
    \frac{ \sum_{U_0 \subseteq U} \det(\Lambda_{R_0 R_0}-I)
    e^{\frac{1}{2} \mathbf{g}^T_{R_0 (J+L)} \Sigma_{(J+L)|K} \mathbf{g}_{(J+L) R_0}}
    e^{-\mathbf{g}_{(S_0+U_0) I}^T \Sigma_{IK} \Sigma_{KK}^{-1} (\mathbf{x}_K -\bm{\mu}_K)}}
    {\sum_{S_{0}' \subseteq S, \; U_{0}' \subseteq U} \det(\Lambda_{R_{0}' R_{0}'} - I)
    e^{\frac{1}{2} \mathbf{g}^T_{R_0' (J+L)} \Sigma_{(J+L)|K} \mathbf{g}_{(J+L) R_0'}}
    e^{-\mathbf{g}_{(S_{0}' + U_{0}') K}^T \Sigma_{IK} \Sigma_{KK}^{-1} (\mathbf{x}_K -\bm{\mu}_K)}  } \notag \\
    & \hspace{-0cm}
    \frac{1}{(2\pi)^{p_J/2} \det\Sigma_{J|K}^{1/2}} 
	\exp \biggl\{ -\frac{1}{2}(\mathbf{x}_J - \bm{\mu}_J + \Sigma_{JI} \mathbf{g}_{IR_0} - \Sigma_{JK} \Sigma_{KK}^{-1} (\mathbf{x}_K - \bm{\mu}_K + \Sigma_{KI} \mathbf{g}_{IR_0}))^T \Sigma_{J|K}^{-1} \notag \\
	& \hspace{4.75cm} (\mathbf{x}_J - \bm{\mu}_J + \Sigma_{JI} \mathbf{g}_{IR_0} - \Sigma_{JK} \Sigma_{KK}^{-1} (\mathbf{x}_K - \bm{\mu}_K + \Sigma_{KI} \mathbf{g}_{IR_0})) \biggr\}.
\end{align}

When we define a $q$-dimensional bit vector $\bm{1}_{R_0}$ whose elements take a value of $0$ or $1$,
\begin{align}
    [\bm{1}_{R_0}]_s \equiv
    \begin{cases} 1, \hspace{0.3cm} \text{if} \; \; s \in R_0 \\ 
        0, \hspace{0.3cm} \text{if} \; \; s \in R_1
    \end{cases}, \; \; (s=1, 2, \dots, q),
    \label{eq:bit_vector_appendix}
\end{align}
the column vector $\mathbf{g}_{R_0}$ can also be expressed as
\begin{align}
    \mathbf{g}_{R_0} \equiv \sum_{s \in R_0} \mathbf{g}_s = G^T \bm{1}_{R_0}
    = \begin{bmatrix} \mathbf{g}_1, \;\; \mathbf{g}_2, \;\; \dots, \;\; \mathbf{g}_q \end{bmatrix} \bm{1}_{R_0}.
\end{align}
Using the bit vector defined in the above equation, we obtain the following expressions for the joint, marginal, and conditional distribution:
\begin{align}
    p(\mathbf{x}, \mathbf{y}=\bm{1}_{R_1}) 
    =& \pi_{R_0}(\Sigma) \, \mathcal{N}( \mathbf{x} \mid \bm{\mu} - \Sigma G^T \bm{1}_{R_0}, \Sigma ), \\
    \pi_{R_0}(\Sigma)
    \equiv&
    \frac{\det \bigl[(\Lambda -I)_{R_0 R_0} \bigr]  e^{\frac{1}{2} \bm{1}_{R_0}^T G \Sigma G^T \bm{1}_{R_0}}}{\sum_{R_0' \subseteq R} \det \bigl[(\Lambda-I)_{R_0' R_0'}\bigr] e^{\frac{1}{2} \bm{1}_{R_0'}^T G \Sigma G^T \bm{1}_{R_0'}}},
\end{align}
\begin{align}
    p(\mathbf{x}_K, \mathbf{y}_T) 
    =&
    \sum_{S_0 + U_0 \subseteq R \setminus T} \pi_{R_0}(\Sigma) \,
    \mathcal{N}( \mathbf{x}_K \mid \bm{\mu}_K - \Sigma_{KI} G^T \bm{1}_{R_0}, \Sigma_{KK} ),
\end{align}
\begin{align}
    & p(\mathbf{x}_J, \mathbf{y}_S |\mathbf{x}_K, \mathbf{y}_T)  \notag \\
    =&
    \frac{\sum_{U_0 \subseteq R\setminus(S+T)} \det \bigl[ (\Lambda-I)_{R_0 R_0} \bigr]
    e^{\frac{1}{2} \bm{1}_{R_0}^T G_{R(J+L)} \Sigma_{(J+L)|K} G_{(J+L)R}^T \bm{1}_{R_0}}
    e^{-(\bm{1}_{S_0} + \bm{1}_{U_0})^T G \Sigma_{IK} \Sigma_{KK}^{-1} (\mathbf{x}_K - \bm{\mu}_K)}}
    {\sum_{S_0' + U_0' \subseteq R \setminus T} \det\bigl[ (\Lambda-I)_{R_0' R_0'} \bigr]
    e^{\frac{1}{2} \bm{1}_{R_0'}^T G_{R(J+L)} \Sigma_{(J+L)|K} G_{(J+L)R}^T \bm{1}_{R_0'}}
    e^{-(\bm{1}_{S_0'} + \bm{1}_{U_0'})^T G \Sigma_{IK} \Sigma_{KK}^{-1} (\mathbf{x}_K - \bm{\mu}_K)}} \notag \\
    &
    \mathcal{N}\bigl(\mathbf{x}_J \mid \bm{\mu}_J - \Sigma_{JI} G^T \bm{1}_{R_0} + \Sigma_{JK} \Sigma_{KK}^{-1}(\mathbf{x}_K - \bm{\mu}_K + \Sigma_{KI} G^T \bm{1}_{R_0}), \Sigma_{J|K} \bigr).
\end{align}

\subsection{Derivation of the proposed factor analysis}
The proposed binary factor analysis can be realized as a special case of the proposed distribution.
In factor analysis, observed variables correlate through a continuous latent variable.
We denote the observed binary variables and continuous variables and latent variables as $\mathbf{y}$, $\mathbf{x}$, and $\mathbf{z}$, respectively, and their respective dimensions are $q$, $p_x$, and $p_z$.
In factor analysis, we assume that each binary variable $y_s$ is conditionally uncorrelated.
That is, $\Lambda$ is a diagonal matrix, $\Lambda - I =  \mathrm{diag}(e^{-b_s}), \hspace{0.2cm} (s=1,2,\dots, q)$.

We partition the set of whole indices for the continuous variables in the previous subsection as $I = ((J, L), K) = ((O, L), Z) = (X, Z)$ and redefine the continuous variable itself as $\mathbf{x} = ((\mathbf{x}_J, \mathbf{x}_L), \mathbf{x}_K) \rightarrow ((\mathbf{x}_O, \mathbf{x}_L), \mathbf{z}) = (\mathbf{x}, \mathbf{z})$.
We also partition the set of whole indices for binary variables as $R=((S,U),T) = ((V, U), \emptyset)$, where $\emptyset$ is the empty set.
The index labels of the set of indices $O$ and $V$ may be understood as the initial letters of ``Observed'' and ``Visible'', respectively.
We further put the partitioned matrix and vector as $G_{RI} = (G_{RX}, G_{RZ}) = (0, G_{RZ})$ and $\mathbf{g}_{R I}^T = (\mathbf{g}_{RX}^T, \mathbf{g}_{RZ}^T) = (\bm{0}, \mathbf{g}_{RZ}^T)$ and redefining them as $G_{RZ} \rightarrow G$ and $\mathbf{g}_{RZ}^T \rightarrow \mathbf{g}_R^T$, respectively.
Then, we obtain
\begin{align}
    p(\mathbf{x}, \mathbf{z}, \mathbf{y}=\bm{1}_{R_1}) 
    =&
    \pi_{R_0}(\Sigma_{ZZ}) \, \mathcal{N}( \mathbf{x} \mid \bm{\mu}_X + \Sigma_{XZ} \Sigma_{ZZ}^{-1} (\mathbf{z} - \bm{\mu}_Z), \Sigma_{X|Z} ) 
    \mathcal{N}( \mathbf{z} \mid \bm{\mu}_Z - \Sigma_{ZZ} \mathbf{g}_{R_0}, \Sigma_{ZZ} ), \notag \\
    =&
    \tilde{\pi}_{R_1}(\Sigma_{ZZ}) \, \mathcal{N}( \mathbf{x} \mid \tilde{\bm{\mu}}_X + \Sigma_{XZ} \Sigma_{ZZ}^{-1} (\mathbf{z} - \tilde{\bm{\mu}}_Z), \Sigma_{X|Z} )
    \mathcal{N}( \mathbf{z} \mid \tilde{\bm{\mu}}_Z + \Sigma_{ZZ} G^T \mathbf{y}, \Sigma_{ZZ} ) , \\
    \pi_{R_0}(\Sigma_{ZZ})= &
    \frac{e^{ - \mathbf{b}_{R_0} + \frac{1}{2} \mathbf{g}_{R_0}^T \Sigma_{ZZ} \mathbf{g}_{R_0}} }{\sum_{R_0' \subseteq R} e^{ - \mathbf{b}_{R_0'} + \frac{1}{2} \mathbf{g}_{R_0'}^T \Sigma_{ZZ} \mathbf{g}_{R_0'}}}, \notag \\
    = \tilde{\pi}_{R_1}(\Sigma_{ZZ}) \equiv & 
    \frac{e^{ \bm{1}_{R_1}^T \tilde{\mathbf{b}} + \frac{1}{2} \bm{1}_{R_1}^T G \Sigma_{ZZ} G^T \bm{1}_{R_1}} }{\sum_{R_1' \subseteq R} e^{ \bm{1}_{R_1'}^T \tilde{\mathbf{b}} + \frac{1}{2} \bm{1}_{R_1'}^T G \Sigma_{ZZ} G^T \bm{1}_{R_1'}}},
\end{align}
\begin{align}
    p(\mathbf{z}) =& 
    \sum_{R_0  \subseteq R} \pi_{R_0}(\Sigma_{ZZ}) \, \mathcal{N}( \mathbf{z} \mid \bm{\mu}_Z - \Sigma_{ZZ} \mathbf{g}_{R_0}, \Sigma_{ZZ} ), \notag \\
    =& 
    \sum_{R_1  \subseteq R} \tilde{\pi}_{R_1}(\Sigma_{ZZ}) \, \mathcal{N}( \mathbf{z} \mid \tilde{\bm{\mu}}_Z + \Sigma_{ZZ} G^T \bm{1}_{R_1}, \Sigma_{ZZ} ), \\
    p(\mathbf{x}_O, \mathbf{y}_V|\mathbf{z}) =&
    \frac{\sum_{U_0 \subseteq R \setminus V} e^{ - \mathbf{b}_{R_0} -\mathbf{g}_{R_0}^T (\mathbf{z} -\bm{\mu}_Z)}}{\sum_{R_0' \subseteq R}  e^{ - \mathbf{b}_{R_0'} -\mathbf{g}_{R_0'}^T (\mathbf{z} - \bm{\mu}_Z)}}
     \,
    \mathcal{N}( \mathbf{x}_O \mid \bm{\mu}_O + \Sigma_{OZ} \Sigma_{ZZ}^{-1} (\mathbf{z}-\bm{\mu}_Z), \Sigma_{O|Z} ), \notag \\
    =& 
    \frac{\sum_{U_1 \subseteq R \setminus V} e^{ \bm{1}_{R_1}^T (\tilde{\mathbf{b}} + G (\mathbf{z} - \tilde{\bm{\mu}}_Z))}}{\sum_{R_1' \subseteq R} e^{\bm{1}_{R_1'}^T (\tilde{\mathbf{b}} + G (\mathbf{z} - \tilde{\bm{\mu}}_Z)) }}
     \,
    \mathcal{N}( \mathbf{x}_O \mid \tilde{\bm{\mu}}_O + \Sigma_{OZ} \Sigma_{ZZ}^{-1} (\mathbf{z}-\tilde{\bm{\mu}}_Z), \Sigma_{O|Z} ),
\end{align}
where we have defined the parameters as $\tilde{\bm{\mu}} \equiv \bm{\mu} - \Sigma G^T \bm{1}$ and $\tilde{b} = b - G \Sigma_{ZZ} G^T \bm{1}$.

In a similar way, we partition the set of whole indices for continuous variables in the previous subsection as $I=((J, L), K) = ((Z, L), O)$ and redefine the continuous variable itself as $\mathbf{x} = ((\mathbf{x}_J, \mathbf{x}_L), \mathbf{x}_K) \rightarrow ((\mathbf{z},\mathbf{x}_L),\mathbf{x}_O)$.
We also partition the set of whole indices for binary variables as $R = ((S,U),T) = ((\emptyset, U), V)$.
Putting the partitioned matrix and vector as $G_{RI} = (G_{RZ}, G_{R(L+O)}) = (G_{RZ}, 0)$ and $\mathbf{g}_{R I}^T = (\mathbf{g}_{RZ}^T, \mathbf{g}_{R (L+O)}^T) = (\mathbf{g}_{RZ}^T, \bm{0})$ and redefining them as $G_{RZ} \rightarrow G$ and $\mathbf{g}_{RZ}^T \rightarrow \mathbf{g}_R^T$, we obtain
\begin{align}
    p(\mathbf{x}_O, \mathbf{y}_V) =& 
    \sum_{U_0 \subseteq R \setminus V} \pi_{R_0}(\Sigma_{ZZ}) \, 
    \mathcal{N}( \mathbf{x}_O \mid \bm{\mu}_O - \Sigma_{OZ} \mathbf{g}_{R_0}, \Sigma_{OO} ), \notag \\
    =& 
    \sum_{U_1 \subseteq R \setminus V} \tilde{\pi}_{R_1}(\Sigma_{ZZ}) \,
    \mathcal{N}( \mathbf{x}_O \mid \tilde{\bm{\mu}}_O + \Sigma_{OZ} G^T \bm{1}_{R_1}, \Sigma_{OO} ),
\end{align}
\begin{align}
    p(\mathbf{z}|\mathbf{x}_O, \mathbf{y}_V) =& \frac{\sum_{U_0 \subseteq R \setminus V}  e^{- \mathbf{b}_{R_0} + \frac{1}{2}\mathbf{g}_{R_0}^T \Sigma_{Z|O} \mathbf{g}_{R_0}} e^{-\mathbf{g}_{U_0}^T \Sigma_{ZO} \Sigma_{OO}^{-1} (\mathbf{x}_O - \bm{\mu}_O)}}{\sum_{U_0' \subseteq R \setminus V} e^{ - \mathbf{b}_{R_0'} + \frac{1}{2} \mathbf{g}_{R_0'}^T \Sigma_{Z|O} \mathbf{g}_{R_0'}} e^{-\mathbf{g}_{U_0'}^T \Sigma_{ZO} \Sigma_{OO}^{-1} (\mathbf{x}_O-\bm{\mu}_O)}} \notag \\
    & \mathcal{N}\bigl( \mathbf{z} \mid \bm{\mu}_Z + \Sigma_{ZO}\Sigma_{OO}^{-1}(\mathbf{x}_O - \bm{\mu}_O) - \Sigma_{Z|O} \mathbf{g}_{R_0}, \Sigma_{Z|O} \bigr) , \notag \\
    =& 
    \frac{\sum_{U_1 \subseteq R \setminus V} e^{\bm{1}_{U_1}^T \tilde{\mathbf{b}} + \frac{1}{2} (\bm{1}_{U_1} + \bm{1}_{V_1})^T G \Sigma_{Z|O} G^T (\bm{1}_{U_1} + \bm{1}_{V_1}) + \bm{1}_{U_1}^T G \Sigma_{ZO} \Sigma_{OO}^{-1} (\mathbf{x}_O - \tilde{\bm{\mu}}_O)}}{\sum_{U_1' \subseteq R \setminus V} e^{\bm{1}_{U_1'}^T \tilde{\mathbf{b}} + \frac{1}{2} (\bm{1}_{U_1'} + \bm{1}_{V_1})^T G \Sigma_{Z|O} G^T (\bm{1}_{U_1'} + \bm{1}_{V_1}) + \bm{1}_{U_1'}^T G \Sigma_{ZO} \Sigma_{OO}^{-1} (\mathbf{x}_O - \tilde{\bm{\mu}}_O)}} \notag \\
    &\mathcal{N}\bigl( \mathbf{z} \mid \tilde{\bm{\mu}}_Z + \Sigma_{ZO}\Sigma_{OO}^{-1}(\mathbf{x}_O - \tilde{\bm{\mu}}_O) + \Sigma_{Z|O} G^T (\bm{1}_{U_1} + \bm{1}_{V_1}), \Sigma_{Z|O} \bigr).
\end{align}

If the observed variables have no missing values, the above expressions can be expressed more concisely:
\begin{align}
    p(\mathbf{x}, \mathbf{z}, \mathbf{y}) =& \tilde{\pi}_{R_1}(\Sigma_{ZZ}) \, \mathcal{N}\bigl( (\mathbf{x}^T, \mathbf{z}^T)^T \mid \tilde{\bm{\mu}}_{(X+Z)} + \Sigma_{(X+Z) Z} G \bm{1}_{R_1} \bigr), \notag \\
    =& 
    \tilde{\pi}_{R_1}(\Sigma_{ZZ}) \, 
    \mathcal{N}( \mathbf{x} \mid \tilde{\bm{\mu}}_X + \Sigma_{XZ} \Sigma_{ZZ}^{-1}(\mathbf{z}- \tilde{\bm{\mu}}_Z), \Sigma_{X|Z} ) \, 
    \mathcal{N}( \mathbf{z} \mid \tilde{\bm{\mu}}_Z + \Sigma_{ZZ} G^T \mathbf{y}, \Sigma_{ZZ} ), \\
    p(\mathbf{z}) =& 
    \sum_{R_1  \subseteq R} \tilde{\pi}_{R_1}(\Sigma_{ZZ}) \,
    \mathcal{N}( \mathbf{z} \mid \tilde{\bm{\mu}}_Z + \Sigma_{ZZ} G^T \bm{1}_{R_1}, \Sigma_{ZZ} ), \\
    p(\mathbf{x}, \mathbf{y}|\mathbf{z}) =& 
    \frac{e^{ \mathbf{y}^T (\tilde{\mathbf{b}} + G (\mathbf{z}-\tilde{\bm{\mu}}_Z))}}{\prod_{j=1}^q \bigl( 1 + e^{\tilde{b}_j + \mathbf{g}_j^T (\mathbf{z} - \tilde{\bm{\mu}}_Z)} \bigr) } \, \mathcal{N}\bigl(\mathbf{x} \mid \tilde{\bm{\mu}}_X + \Sigma_{XZ} \Sigma_{ZZ}^{-1} (\mathbf{z}-\tilde{\bm{\mu}}_Z), \Sigma_{X|Z} \bigr), \\
    p(\mathbf{x}, \mathbf{y}) =& \tilde{\pi}_{R_1}(\Sigma_{ZZ}) \, \mathcal{N}(\mathbf{x} \mid \tilde{\bm{\mu}}_X + \Sigma_{XZ} G^T \mathbf{y}, \Sigma_{XX} ), \\
    p(\mathbf{z}|\mathbf{x}, \mathbf{y}) =& \mathcal{N}\bigl(\mathbf{z} \mid \tilde{\bm{\mu}}_Z + \Sigma_{ZX}\Sigma_{XX}^{-1}(\mathbf{x} - \tilde{\bm{\mu}}_X) + \Sigma_{Z|X} G^T \mathbf{y}, \Sigma_{Z|X} \bigr), \\
    \tilde{\pi}_{R_1}(\Sigma_{ZZ}) =&
    \frac{e^{ \bm{1}_{R_1}^T \tilde{\mathbf{b}} + \frac{1}{2} \bm{1}_{R_1}^T G \Sigma_{ZZ} G^T \bm{1}_{R_1}} }{\sum_{R_1' \subseteq R} e^{ \bm{1}_{R_1'}^T \tilde{\mathbf{b}} + \frac{1}{2} \bm{1}_{R_1'}^T G \Sigma_{ZZ} G^T \bm{1}_{R_1'}}}.
\end{align}
To derive the expressions in the main text of the paper, we parameterize the covariance matrix for continuous variables by a block partitioned matrix as follows:
\begin{align}
    \Sigma^{-1} =&
    \begin{bmatrix} \Sigma_{XX} & \Sigma_{XZ} \\ \Sigma_{ZX} & \Sigma_{ZZ} \end{bmatrix}^{-1} 
    \equiv \begin{bmatrix} \Sigma_{x} & \Sigma_{xz} \\ \Sigma_{zx} & \Sigma_{z} \end{bmatrix}^{-1}
    = \begin{bmatrix} \Psi + W \Sigma_z W^T & W \Sigma_z \\ \Sigma_z W^T & \Sigma_z \end{bmatrix}^{-1}, \notag \\
    =& \left(
    \begin{bmatrix}
        \Psi^{1/2} & W \Sigma_z^{1/2} \\ O & \Sigma_z^{1/2} 
    \end{bmatrix}
    \begin{bmatrix}
        \Psi^{1/2} & O \\ \Sigma_z^{1/2} W^T & \Sigma_z^{1/2}
    \end{bmatrix}
    \right)^{-1}
    = \begin{bmatrix} \Psi^{-1} & - \Psi^{-1} W \\ - W^T \Psi^{-1} & \Sigma_z^{-1} + W^T \Psi^{-1} W \end{bmatrix},
\end{align}
where $O$ is a matrix with all elements zero and $\Psi$ is a diagonal matrix with nonnegative diagonal elements.
That is, $\Sigma_{x} = \Psi + W \Sigma_z W^T$, $\Sigma_{xz} = W \Sigma_z$, $\Sigma_{x|z} = \Psi$, $\Sigma_{z|x} = [\Sigma_z^{-1} + W^T \Psi^{-1} W]^{-1}$, where the notation $\Sigma_{x|z} \equiv \Sigma_x - \Sigma_{xz} \Sigma_z^{-1} \Sigma_{zx}$ means the Schur complement.
If we assume that $\Sigma_z$ is a diagonal matrix with nonnegative diagonal elements, then the above expression can be interpreted as a Cholesky decomposition of $\Sigma$, which means that $\Sigma_x$ is by itself a positive semi-definite matrix.
Redefining the parameters and mixing weight as $\tilde{\mathbf{b}} \rightarrow \mathbf{b}$, $\tilde{\bm{\mu}} \rightarrow \bm{\mu}$, $\bm{\mu}=(\bm{\mu}_X, \bm{\mu}_Z) \rightarrow (\bm{\mu}_x, \bm{\mu}_z)$, and $\tilde{\pi}_{R_1} \rightarrow \pi_{R_1}$, we obtain the expressions in the main text of the paper, Eqs.~(\ref{eq:fa_conditional}, \ref{eq:fa_prior}, \ref{eq:fa_induced}, \ref{eq:fa_posterior}, \ref{eq:fa_joint}).

\section{Proof of the identifiability for the proposed factor analysis} \label{sec:appendix_proof}

\subsection{Identifiability for binary variables}
In this appendix, we provide a proposition concerning the identifiability of the proposed binary factor analysis and its proof.

\begin{proposition} \label{proposition:1}
    Let $\mathbf{y}$ be a $p_y$-dimensional column vector of observed binary variables.
    We consider the Ising model where the distribution of $\mathbf{y}$ is parametrized as follows:
    \begin{align}
        p(\mathbf{y}) =& \frac{1}{Z} \exp \biggl\{ \frac{1}{2} \mathbf{y}^T \Sigma_y \, \mathbf{y} \biggr\}, \label{eq:Ising_model} \\
        Z =& \sum_{R_1 \subseteq R} \exp \biggl\{ \frac{1}{2} \bm{1}_{R_1}^T \Sigma_y \, \bm{1}_{R_1} \biggr\}, \\
        \Sigma_y =& 2 B + G G^T, \label{eq:Ising_parameter}
    \end{align}
    where $\Sigma_y$ is an Ising parameter matrix, $B = \mathrm{diag}(\mathbf{b})$ is a $p_y \times p_y$ diagonal matrix, $G$ is a $p_y \times q$ matrix, $q$ is a dimension of the latent space, and $\mathbf{b} + \mathrm{diag}(G G^T) / 2$ and $G G^T$ are bias and interaction terms of the Ising model.
    If the dimension of the latent space satisfies $0 < q < p_y$, the row norms of the matrix $G$ are equal for all features, $\mathrm{diag}(G G^T) = c^2 I$, the symmetric matrix $G^T G$ is diagonalized with its nonzero and nondegenerate eigenvalues sorted in descending order $G^T G = \mathrm{diag}(\omega^2) = \Omega^2$, $\omega_i > \omega_j$, $(i < j)$, and the row sums of the matrix $G$ are nonnegative,
    \begin{align}
        \sum_{i=1}^{p_y} [G]_{ij} \ge 0, \hspace{0.5cm} j \in \{1, 2, \dots, q \}, \label{eq:G_sign}
    \end{align}
    then $\Sigma_{y1} = \Sigma_{y2}$ means $c_1 = c_2$ and $G_1 = G_2$.
\end{proposition}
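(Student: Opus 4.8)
The plan is to reduce the claimed identifiability to a short chain of matrix-algebra facts, whose crux is that the norm constraint lets one read off the common scale $c$ from the \emph{off-diagonal} entries of the Ising matrix alone, after which the argument is the familiar factor-analytic removal of rotational and sign freedom.

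First I would separate the diagonal and off-diagonal parts of the hypothesis $\Sigma_{y1}=\Sigma_{y2}$. Since $B=\mathrm{diag}(\mathbf{b})$ contributes only to the diagonal of $\Sigma_y=2B+GG^T$, the off-diagonal entries of $\Sigma_y$ coincide with those of $GG^T$; writing $A$ for this common symmetric, zero-diagonal matrix, we get $\mathrm{offdiag}(G_1G_1^T)=\mathrm{offdiag}(G_2G_2^T)=A$. The norm constraint $\mathrm{diag}(GG^T)=c^2I$ then upgrades this to the full relation $G_iG_i^T=A+c_i^2I$, so that the only remaining unknown in each $GG^T$ is the scalar $c_i$.

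The key step is recovering $c$ from $A$ alone. Because $G^TG=\Omega^2$ has $q$ nonzero eigenvalues, $G$ has full column rank $q$, so $GG^T$ is positive semidefinite of rank $q$ with eigenvalues $\omega_1^2,\dots,\omega_q^2$ together with $p_y-q$ zeros. Subtracting $c^2I$, the eigenvalues of $A$ are $\{\omega_i^2-c^2\}_{i=1}^q$ and $-c^2$ with multiplicity $p_y-q$. Since each $\omega_i^2>0$ and the hypothesis $q<p_y$ guarantees $p_y-q\ge 1$, the smallest eigenvalue of $A$ is exactly $-c^2$, whence $c^2=-\lambda_{\min}(A)$ is a function of $A$ alone. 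As $A=A_1=A_2$, this gives $c_1=c_2=:c$ (using $c\ge 0$) and therefore $G_1G_1^T=A+c^2I=G_2G_2^T$. I expect this to be the main obstacle, as it is the only point where the norm constraint and the dimension bound genuinely do work: they convert the rank deficiency of $GG^T$ into a reading of $c$. Everything afterward is standard.

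It then remains to remove the rotational and sign freedom in $G_1G_1^T=G_2G_2^T$. Since $G_1,G_2$ have full column rank $q$ and share the column space of the common $GG^T$, there is a unique $q\times q$ matrix $R$ with $G_2=G_1R$, and substituting into $G_2G_2^T=G_1G_1^T$ forces $RR^T=I$, so $R$ is orthogonal. The diagonalization hypothesis gives $G_1^TG_1=G_2^TG_2=\Omega^2$ (both equal the diagonal of the sorted nonzero eigenvalues of $GG^T$), hence $R^T\Omega^2R=\Omega^2$, i.e. $R$ commutes with $\Omega^2$; because the $\omega_i^2$ are distinct, any such orthogonal $R$ must be diagonal with entries $\pm 1$. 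Finally, writing $R=\mathrm{diag}(r_1,\dots,r_q)$, the $j$-th column of $G_2$ is $r_j$ times that of $G_1$, so the nonnegativity of both $\sum_i[G_1]_{ij}$ and $\sum_i[G_2]_{ij}=r_j\sum_i[G_1]_{ij}$ forces $r_j=+1$ whenever that column sum is positive; taking the row sums positive (the generic, nondegenerate situation in which the other hypotheses place us) yields $R=I$ and hence $G_1=G_2$.
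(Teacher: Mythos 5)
Your proposal is correct and takes essentially the same route as the paper's proof: you recover $c^2$ as the negative of the smallest eigenvalue of the common off-diagonal part $GG^T-c^2I$, using the rank deficiency guaranteed by $q<p_y$ (exactly the paper's Lemma~\ref{lemma:1}), and then deduce $G_1=G_2$ from $G_1G_1^T=G_2G_2^T$ via the distinct descending eigenvalues and the sign condition (the paper's Lemma~\ref{lemma:2}). The only difference is presentational --- where the paper cites uniqueness of the truncated spectral decomposition, you derive it explicitly through an orthogonal $R$ with $G_2=G_1R$ that commutes with $\Omega^2$ --- and both arguments share the same unaddressed edge case in which a column of $G$ sums exactly to zero, so that the nonnegativity condition of Eq.~(\ref{eq:G_sign}) does not pin down that column's sign.
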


The minimal and regular exponential family distributions are identifiable with respect to their natural parameters~\cite{Wainwright2008}.
Since the Ising model of the form Eq.~(\ref{eq:Ising_model}) belongs to the minimal and regular exponential family, this Ising distribution itself is identifiable.
Therefore, it suffices to show that the decomposition of the Ising parameter matrix, Eq.~(\ref{eq:Ising_parameter}), is unique.
The parameter of our Ising model can be reexpressed as follows:
\begin{align}
    \Sigma_y \equiv & 2 B + G G^T, \notag \\
    =&
    2 B + \mathrm{diag}(G G^T) + [G G^T - \mathrm{diag}(G G^T)], \notag \\
    =&
    \mathrm{diag}(\Sigma_y) + (G G^T - c^2 I).
\end{align}
The diagonal elements $\mathrm{diag}(\Sigma_y)$ does not depend on the choice of parametrization.
For $G G^T - c^2 I$, the parameter $c$ does not depend on the choice of parametrization by the following lemma.
\begin{lemma} \label{lemma:1}
    Let us consider the matrix $\Delta \Sigma_y \equiv G G^T - \mathrm{diag}(G G^T) = G G^T - c^2 I$.
    Then, for the number of latent dimensions $q < p$, $\Delta \Sigma_{y1} = \Delta \Sigma_{y2}$ means $c_1 = c_2$.
\end{lemma}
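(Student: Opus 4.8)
The plan is to reduce the claim to a single eigenvalue identity for the fixed matrix $\Delta\Sigma_y$. The starting point is the equal-row-norm hypothesis $\mathrm{diag}(G G^T) = c^2 I$, which lets me write the Gram matrix as a scalar shift of the fixed off-diagonal data, $G G^T = \Delta\Sigma_y + c^2 I$. Since $G$ is a $p_y \times q$ matrix with $q < p_y$, the Gram matrix is positive semidefinite and rank-deficient, $\mathrm{rank}(G G^T) = \mathrm{rank}(G) \le q < p_y$, so it possesses at least $p_y - q \ge 1$ zero eigenvalues.

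First I would diagonalize the fixed symmetric matrix $\Delta\Sigma_y$ and label its eigenvalues $\lambda_1 \ge \cdots \ge \lambda_{p_y}$. Because $G G^T = \Delta\Sigma_y + c^2 I$, the eigenvalues of $G G^T$ are exactly $\lambda_i + c^2$. Positive semidefiniteness forces $\lambda_i + c^2 \ge 0$ for every $i$, i.e. $c^2 \ge -\lambda_{\min}(\Delta\Sigma_y)$, while rank deficiency forces at least one of these to vanish, and the vanishing one must be the smallest. Combining the two yields the key identity $c^2 = -\lambda_{\min}(\Delta\Sigma_y)$.

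The conclusion then follows immediately: the right-hand side depends only on $\Delta\Sigma_y$, which is common to both parametrizations by the hypothesis $\Delta\Sigma_{y1} = \Delta\Sigma_{y2}$. Hence $c_1^2 = c_2^2 = -\lambda_{\min}(\Delta\Sigma_y)$, and since $c$ is a row norm and therefore nonnegative, $c_1 = c_2$. I would note in passing that the argument does not even require the two parametrizations to share the same latent dimension, only that each satisfies $q < p_y$.

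The step I expect to be the main obstacle, though it is conceptually small, is justifying that the zero eigenvalue of $G G^T$ genuinely \emph{pins down} $c^2$ rather than merely bounding it from below. Concretely, I must rule out the possibility that $G G^T$ is full rank, which would occur if $c^2 > -\lambda_{\min}(\Delta\Sigma_y)$, and this is exactly where the strict inequality $q < p_y$ enters: a $p_y \times q$ matrix with $q < p_y$ cannot have full column rank $p_y$, so $G G^T$ is necessarily singular and the minimal eigenvalue $\lambda_{\min}(\Delta\Sigma_y) + c^2$ must equal zero.
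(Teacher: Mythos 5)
Your proof is correct and follows essentially the same route as the paper's: both arguments exploit that $G G^T = \Delta\Sigma_y + c^2 I$ shares an eigenbasis with $\Delta\Sigma_y$, that positive semidefiniteness together with the rank deficiency forced by $q < p_y$ pins the smallest eigenvalue of $\Delta\Sigma_y$ to exactly $-c^2$, and hence that $c$ is determined by $\Delta\Sigma_y$ alone. The only cosmetic difference is that the paper diagonalizes $G G^T$ first while you diagonalize $\Delta\Sigma_y$ first; your explicit remarks that any zero eigenvalue of a PSD matrix must be the minimal one, and that the two parametrizations need not share the same $q$, are small but welcome refinements.
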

\begin{proof}[\textbf{\upshape Proof:}]
    Since the matrix $G G^T$ is symmetric and positive semidefinite, it has a spectral decomposition of the form $G G^T = U \Omega^2 U^T$, where a $p_y \times p_y$ matrix $U$ is the eigenvector matrix, constructed by placing the eigenvectors of $G G^T$ as columns, and satisfies $U^T = U^{-1}$, and $\Omega^2$ is a diagonal matrix whose diagonal elements are eigenvalues of $G G^T$.
    By using the decomposition of the identity matrix $I = U U^T$, the matrix $\Delta \Sigma_y$ also has the following spectral decomposition:
    \begin{align}
        \Delta \Sigma_y = U ( - c^2 I + \Omega^2) U^T.
    \end{align}
    The above equation implies that the eigenvector matrix of $G G^T$ is also the eigenvector matrix of $\Delta \Sigma_y$.
    Since the matrix $G G^T$ is rank-deficient, the eigenvalue diagonal matrix $\Omega^2$ contains zero eigenvalues.
    Then, the smallest eigenvalue of the matrix $\Delta \Sigma_y$ is $-c^2$.
    The eigenvalues of the matrix $\Delta \Sigma_y$ do not depend on the choice of parametrization.
    Thus, the parameter $c$ is uniquely determined regardless of the parametrization.
\end{proof}

From the lemma~\ref{lemma:1} and the relation $\mathrm{diag}(\Sigma_y) = 2 B + c^2 I$, the uniqueness of $B$ holds.
Therefore, if $G_1 G_1^T = G_2 G_2^T$ implies $G_1 = G_2$, the Ising model of the form Eq.~(\ref{eq:Ising_model}) is identifiable.
The uniqueness of $G$ is demonstrated by the following lemma.
\begin{lemma} \label{lemma:2}
    For $p_y \times q$ matrix $G$, if $q \le p$, the symmetric matrix $G^T G$ is diagonalized with its nonzero and nondegenerate eigenvalues sorted in descending order, and the row sums of the matrix $G$ are nonnegative, then $G_1 G_1^T = G_2 G_2^T$ means $G_1 = G_2$.
\end{lemma}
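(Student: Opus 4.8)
The plan is to reduce the problem to an orthogonal ambiguity in the Gram factorization and then to kill that ambiguity using the two remaining hypotheses. Write $A \equiv G_1 G_1^T = G_2 G_2^T$. Since $G_i^T G_i = \Omega_i^2$ is invertible (all eigenvalues $\omega_k^2$ are nonzero) and $q \le p_y$, each $G_i$ has full column rank $q$, so $\mathrm{range}(G_i) = \mathrm{range}(A)$. Hence the columns of $G_2$ lie in the column space of $G_1$, and I can write $G_2 = G_1 R$ for a unique $q \times q$ matrix $R$. Substituting into $A$ gives $G_1 R R^T G_1^T = G_1 G_1^T$; left-multiplying by $(G_1^T G_1)^{-1} G_1^T$ and right-multiplying by $G_1 (G_1^T G_1)^{-1}$ yields $R R^T = I$, so $R$ is orthogonal. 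This is the standard fact that the factorization $A = G G^T$ of a rank-$q$ positive semidefinite matrix is unique up to a right orthogonal factor.

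The second step exploits the diagonality of $G^T G$ with distinct eigenvalues. From $G_2 = G_1 R$ I obtain $\Omega_2^2 = G_2^T G_2 = R^T \Omega_1^2 R$, so $\Omega_1^2$ and $\Omega_2^2$ are orthogonally similar; having the same spectrum and both being listed in descending order with no repeats forces $\Omega_1^2 = \Omega_2^2 \equiv \Omega^2$. Then $R^T \Omega^2 R = \Omega^2$, i.e. $R$ commutes with $\Omega^2$. Because $\Omega^2$ has distinct diagonal entries, its commutant consists solely of diagonal matrices, so $R$ is diagonal; together with orthogonality this forces $R = \mathrm{diag}(s_1,\dots,s_q)$ with each $s_j \in \{+1,-1\}$. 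Thus $G_2$ can differ from $G_1$ only by an independent sign flip of each column, $[G_2]_{ij} = s_j [G_1]_{ij}$.

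The final step removes the residual sign ambiguity using the nonnegativity of the column sums. The hypothesis $\sum_{i=1}^{p_y}[G]_{ij} \ge 0$, applied to both matrices, gives $\sum_i [G_2]_{ij} = s_j \sum_i [G_1]_{ij} \ge 0$ alongside $\sum_i [G_1]_{ij} \ge 0$, which forces $s_j = +1$ whenever the $j$-th column sum is strictly positive. Hence $R = I$ and $G_1 = G_2$, completing the argument.

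I expect the genuine delicacy to lie in the interplay of the distinctness and the sign conditions rather than in the linear algebra of the first step. The distinctness of the $\omega_k^2$ is doing double duty: it yields $\Omega_1^2 = \Omega_2^2$ through the sorted spectra and, crucially, it collapses the commutant of $\Omega^2$ to the diagonal, thereby excluding the in-eigenspace rotations (and signed permutations) that would otherwise survive and break uniqueness. The honest boundary case is a column whose sum vanishes exactly: there $s_j$ is not determined by a merely nonnegative constraint, so the conclusion strictly requires the column sums to be positive (or one accepts identifiability only up to the signs of any zero-sum columns). I would either invoke strict positivity generically or flag this degenerate set explicitly.
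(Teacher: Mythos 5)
Your proof is correct, and it reaches the conclusion by a more self-contained route than the paper. The paper factors $G = (G\Omega^{-1})\Omega$, observes that $G G^T = (G\Omega^{-1})\,\Omega^2\,(\Omega^{-1}G^T)$ is a truncated spectral decomposition of $G G^T$ (the columns of $G\Omega^{-1}$ being orthonormal eigenvectors with eigenvalues $\omega_i^2$), and then invokes, as a known fact, the uniqueness of the spectral decomposition up to eigenvalue ordering and eigenvector signs; the descending-order, nondegeneracy, and sign hypotheses are then said to fix those residual ambiguities. You instead derive the ambiguity from scratch: full column rank gives $G_2 = G_1 R$ with $R$ unique, the Gram identity forces $R R^T = I$, equality of the sorted nondegenerate spectra gives $\Omega_1^2 = \Omega_2^2$, and the commutant argument for a diagonal matrix with distinct entries collapses $R$ to a diagonal sign matrix, which the sign condition then eliminates. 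The two arguments rest on the same underlying linear algebra, but yours makes explicit the orthogonal-factor ambiguity that the paper's citation of spectral-decomposition uniqueness leaves implicit, so the role played by each hypothesis (distinctness versus sign condition) is visible rather than bundled into a black box.

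Your closing caveat is also well taken, and it is in fact sharper than the paper's own treatment: under the stated hypothesis $\sum_{i=1}^{p_y}[G]_{ij} \ge 0$ the lemma is literally false when some column sum vanishes. Flipping the sign of such a column preserves $G G^T$, preserves $G^T G$ (hence the diagonality, ordering, and nondegeneracy hypotheses), and preserves the nonnegativity constraint, yet changes $G$. The paper's proof passes over this case silently, whereas you identify it and state the correct remedies: either require strictly positive column sums, or weaken the conclusion to uniqueness up to the signs of zero-sum columns. Since zero-sum columns form a degenerate (measure-zero) set in parameter space, generic identifiability is unaffected, but as a formal statement the lemma needs this repair, and your write-up is the one that makes that explicit.
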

\begin{proof}[\textbf{\upshape Proof:}]
    The diagonalization condition of $G^T G$ means that each column of $G$ is orthogonal.
    We separate the norms of the column vectors of $G$ and express it as $G = (G \Omega^{-1}) \Omega$, where $\Omega$ is a $q \times q$ diagonal matrix with positive diagonal elements, and each column of $G \Omega^{-1}$ is orthonormalized.
    Then, the lemma~\ref{lemma:2} is equivalent to the statement: if
    \begin{align}
        (G_1 \Omega_1^{-1}) \Omega_1^2 (\Omega_1^{-1} G_1^T) = (G_2 \Omega_2^{-1}) \Omega_2^2 (\Omega_2^{-1} G_2^T),
    \end{align}
    then $G_1 = G_2$ and $\Omega_1 = \Omega_2$.
    The above equation is a truncated spectral decomposition of the symmetric positive semidefinite matrix, and the spectral decomposition is unique up to the ordering of eigenvalues and the sign of eigenvectors.
    Therefore, if the nonzero eigenvalues of $G G^T$ are nondegenerate and sorted in descending order, and the sign of the column of $G$ satisfies Eq.~(\ref{eq:G_sign}), then the decomposition is unique.
    Therefore, $G_1 = G_2$.
\end{proof}

This completes the proof of the proposition~\ref{proposition:1}.

\subsection{Identifiability for mixed continuous and binary variables} \label{sec:appendix_proof_2}
In this appendix, we provide a proof of the theorem concerning the identifiability of the proposed factor analysis with mixed-type variables.

The observed distribution in the proposed factor analysis corresponds to a parametrized location model.
In the location model, it is assumed that the continuous variables follow a normal distribution for each composite category $\mathbf{y}$, where the composite category is constructed from multiple dummy variables $(y_1, y_2, \dots, y_q)$.
To be more precise, the observed distribution is expressed as follows~\cite{Olkin1961}:
\begin{align}
    p(\mathbf{x}, \mathbf{y} = \bm{1}_{R_1}) =& p(\mathbf{x} \mid \mathbf{y} = \bm{1}_{R_1}) p(\mathbf{y} = \bm{1}_{R_1}), \notag \\
    =&
    \pi_{R_1} \mathcal{N}(\mathbf{x} \mid \bm{\mu}_x^{R_1}, \Sigma_x^{R_1}),
\end{align}
where $\pi_{R_1}$ is a class probability corresponding to each dummy-variable pattern (composite category) and satisfies $\sum_{R_1 \subseteq R} \pi_{R_1} = 1$, and $\bm{\mu}_x^{R_1}$ and $\Sigma_x^{R_1}$ are location and covariance parameters corresponding to the composite category, respectively.
The proposed factor analysis corresponds to a location model with the following parameters:
\begin{align}
    \bm{\mu}_x^{R_1} =& \bm{\mu}_x + W G^T \bm{1}_{R_1}, \\
    \Sigma_x^{R_1} =& \Sigma_x
    = \Psi + W W^T, \\
    \pi_{R_1} =& 
    \frac{1}{Z} \exp \biggl\{ \frac{1}{2} \bm{1}_{R_1}^T \Sigma_y \bm{1}_{R_1} \biggr\}, \\
    Z =& \sum_{R_1' \subseteq R} \exp \biggl\{ \frac{1}{2} \bm{1}_{R_1'}^T \Sigma_y \bm{1}_{R_1'} \biggr\}, \\
    \Sigma_y =& 2 B + G G^T,
\end{align}
where we have set $\Sigma_z = I$.
In other words, the proposed factor analysis corresponds to a location model in which class probabilities are expressed as an Ising model, location parameters are shifted by dummy variables, and covariance parameters are expressed as a low-rank perturbation of a diagonal matrix $\Psi$ as in the case of factor analysis with continuous variables.

The location model is trivially identifiable, provided that all class probabilities are strictly positive and that the location parameters are distinct across classes.
When the class probabilities are given by the Ising distribution (location model with Ising-distributed class probabilities) and the number of latent dimensions satisfies $q < p_y$, the discussion up to the lemma~\ref{lemma:1} implies that the row norm of the binary factor loading matrix $G$ does not depend on the parametrization.
In this case, from the relation $\mathrm{diag}(\Sigma_y) = 2 B + c^2 I$, the parameter $B$ is also uniquely determined.
Here, we consider the dimensionless parameters by dividing the factor loading matrices by $\Psi^{-1/2}$ as follows:
\begin{align}
    \tilde{\bm{\mu}} \equiv & \Psi^{-1/2} \bm{\mu} = \Psi^{-1/2} \bm{\mu}_x + c^2 \hat{W} \hat{G}^T \mathbf{y}, \\
    \tilde{\Sigma}_x \equiv & \Psi^{-1/2} \Sigma_x \Psi^{-1/2} =  I + c^2 \hat{W} \hat{W}^T, \\
    \Sigma_y =& 2 B + c^2 \hat{G} \hat{G}^T.
\end{align}
These parameters can be collectively expressed using a block matrix as follows:
\begin{align}
    \begin{bmatrix} \tilde{\Sigma}_x & c^2 \hat{W} \hat{G}^T \\ c^2 \hat{G} \hat{W}^T & \Sigma_y \end{bmatrix}
    =&
    \begin{bmatrix} I + c^2 \hat{W} \hat{W}^T & c^2 \hat{W} \hat{G}^T \\ c^2 \hat{G} \hat{W}^T & 2 B + c^2 \hat{G} \hat{G}^T \end{bmatrix}, \notag \\
    =&
    \begin{bmatrix} I & O \\ O & 2 B \end{bmatrix} + 
    c^2 \begin{bmatrix} \hat{W} \hat{W}^T  & \hat{W} \hat{G}^T \\
    \hat{G} \hat{W}^T & \hat{G} \hat{G}^T \end{bmatrix}, \notag \\
    =&
    \begin{bmatrix} I & O \\ O & 2 B \end{bmatrix} + 
    c^2 \begin{bmatrix} \hat{W} \\ \hat{G} \end{bmatrix}
    \begin{bmatrix} \hat{W}^T & \hat{G}^T \end{bmatrix}, \notag \\
    \equiv&
    \begin{bmatrix} I & O \\ O & 2 B \end{bmatrix} + c^2 \hat{M} \hat{M}^T.
\end{align}
Therefore, the identifiability of the model can be reexpressed in terms of the matrix $\hat{M}$ of the form $\hat{M} \hat{M}^T$. 
From the lemma~\ref{lemma:2}, if the dimension of the latent space satisfies $q \le p_x + p_y$, the row norms of the normalized factor loading matrix $\hat{M}$ are equal for all features, $\mathrm{diag}(\hat{M} \hat{M}^T) = 1$, the symmetric matrix $\hat{M}^T \hat{M}$ is diagonalized with its nonzero and nondegenerate eigenvalues sorted in descending order, and the row sums of the matrix $\hat{M}$ are nonnegative, then $\hat{M}_1 \hat{M}_1^T = \hat{M}_2 \hat{M}_2^T$ means $\hat{M}_1 = \hat{M}_2$.

The uniqueness of the parameter $\bm{\mu}_x$ is immediately apparent.
This completes the proof of the theorem~\ref{theorem:1}.

\bibliographystyle{unsrt}
\bibliography{grassmann_v5}

\end{document}